\newtheorem{thm}{Theorem}
\newtheorem{defi}{Definition}
\newtheorem{prop}{Proposition}
\newtheorem{coro}{Corollary}
\newtheorem{lemma}{Lemma}
\newtheorem{assmp}{Assumption}
\newtheorem{example}{Example}
\newtheorem{remark}{Remark}
\newenvironment{proof}{\hspace{0ex}\textsc{Proof}.\hspace{1ex}}{\hfill$\Box$  \noindent}
\DeclareMathOperator{\dd}{\mathrm{d\!}}
\DeclareMathOperator{\BA}{ \mathbf{A}}
\DeclareMathOperator{\BB}{ \mathbf{B}}
\DeclareMathOperator{\BC}{ \mathcal{C}}
\DeclareMathOperator{\BCC}{ \mathbf{C}}
\DeclareMathOperator{\BD}{\mathbf{D}}
\DeclareMathOperator{\BE}{\mathbf{E}}
\DeclareMathOperator{\BF}{\mathcal{F}}
\DeclareMathOperator{\BP}{\mathbf{P}}
\DeclareMathOperator{\R}{\mathbb{R}}
\DeclareMathOperator{\setd}{\mathfrak{D}}
\DeclareMathOperator{\setdp}{{\mathfrak{D}}^{+}}
\DeclareMathOperator{\sets}{\mathfrak{S}}
\DeclareMathOperator*{\argmin}{\mathrm{argmin}}
\DeclareMathOperator{\Var}{\mathbf{Var}}
\begin{document}

\title{Continuous-Time Mean-Variance Portfolio Selection \\
        with Constraints on Wealth and Portfolio}
\author{
Xun Li\thanks{Department of Applied Mathematics, The Hong Kong Polytechnic University, Hong Kong, China. This author acknowledges financial supports from Research Grants Council of Hong Kong under grants No. 520412 and 15209614. E-mail: malixun@polyu.edu.hk.} \quad and \quad Zuo Quan Xu\thanks{Corresponding author. Department of Applied Mathematics, The Hong Kong Polytechnic University, Hong Kong, China. This author acknowledges financial supports from Hong Kong Early Career Scheme (No. 533112) and Hong Kong General
Research Fund (No. 529711). E-mail: maxu@polyu.edu.hk.}}


\bibliographystyle{plain}

\maketitle

\begin{abstract}
We consider continuous-time mean-variance portfolio selection with
bankruptcy prohibition under convex cone portfolio constraints.
This is a long-standing and
difficult problem not only because of  its theoretical significance,
but also for its practical importance.
First of all, we transform the above problem into an equivalent
mean-variance problem with bankruptcy prohibition \textit{without}
portfolio constraints. The latter is then treated using martingale
theory. Our findings indicate that we can directly present the
semi-analytical expressions of the pre-committed efficient
mean-variance policy without a viscosity solution technique but within a
general framework of the cone portfolio constraints. The numerical
simulation also sheds light on results established in this
paper. \\ [12pt]
\noindent\textit{Keywords:}  continuous-time,
mean-variance portfolio selection, bankruptcy prohibition, convex
cone  constraints,
efficient frontier, 
HJB equation
\end{abstract}


\section{Introduction}
\noindent
Since Markowitz \cite{Markowitz} published his seminal
work on the mean-variance portfolio selection sixty years ago, the
mean-risk portfolio selection framework has become one of the most
prominent topics in quantitative finance. Recently, there has
been increasing interest in studying the dynamic mean-variance portfolio
problem with various constraints, as well as addressing their financial
applications. Typical contributions include
Bielecki, Jin, Pliska and Zhou \cite{BJPZ}, Cui, Gao, Li and Li \cite{CuiGaoLiLi2012},
Cui, Li and Li \cite{CuiLiLi2012},
Czichowsky and Schweizer \cite{CzichowskySchweizer2012}, Heunis \cite{Heunis}, Hu and Zhou \cite{HZ},
Duffie and Richardson \cite{DR},Labb\'e and Heunis \cite{LabbeHeunis},
Li and Ng \cite{LiNg}, Zhou and Li \cite{ZhouLi}, and Li, Zhou and Lim \cite{LZL}.
The dynamic mean-variance problem can be treated in a forward-looking way by starting
with the initial state. In some financial engineering problems, however, one
needs to study stochastic systems with constrained conditions, such as cone-constrained
policies. This naturally results in a continuous-time mean-variance
portfolio selection problem with \textit{constraints for the wealth process}, (see \cite{BJPZ}),
and \textit{constraints for the policies}, (see \cite{HZ} and \cite{LZL}),
for a given expected terminal target.
To the best of our knowledge, despite active research efforts in this direction in
recent years, there has been little progress in the continuous-time mean-variance problem with the
\textit{mixed} restriction of both bankruptcy prohibition and convex
cone portfolio constraints.
This paper aims to address this long-standing and notoriously
difficult problem, not only for its theoretical significance,
but also for its practical importance.
New ideas, significantly different from those developed in the existing literature,
establish a general theory for stochastic control problems with mixed
constraints for state and control variables.

Li, Zhou and Lim \cite{LZL} considered a continuous-time mean-variance problem with no-shorting constraints,
while Cui, Gao, Li and Li \cite{CuiGaoLiLi2012}
developed its counterpart in discrete-time. Bielecki,
Jin, Pliska and Zhou \cite{BJPZ} paved the way for investigating
continuous-time mean-variance with bankruptcy prohibition using a
martingale approach. Sun and Wang \cite{SunWang} introduced a market
consisting of a riskless asset and one risky portfolio under
constraints such as market incompleteness, no-shorting, or partial
information.
Labb\'e and Heunis \cite{LabbeHeunis} 
employed a duality method to analyze both the mean-variance
portfolio selection and mean-variance hedging problems with general
convex constraints. In particular, Czichowsky and Schweizer
\cite{CzichowskySchweizer2012} further studied cone-constrained
continuous-time mean-variance portfolio selection problem with the price
processes being semimartingales. Cui, Gao, Li and Li
\cite{CuiGaoLiLi2012} and Cui, Li and Li \cite{CuiLiLi2012} derived
explicit optimal semi-analytical mean-variance policies for
discrete-time markets under no-shorting and convex cone constraints,
respectively. Also, F\"ollmer and Schied \cite{FollmerSchied2004}
and Pham and Touzi \cite{PhamTouzi1999} showed that in a constrained
market, no arbitrage opportunity is equivalent to the existence of a
supermartingale measure, under which the discounted wealth process
of any admissible policy is a supermartingale, (see Carassus, Pham
and Touzi \cite{CarassusPhamTouzi} for a situation with upper bounds
on proportion positions). In particular, Xu and Shreve in their
two-part papers \cite{XuShreve1,XuShreve2} investigated a utility
maximization problem with no-shorting constraints using duality
analysis. Recently, Heunis \cite{Heunis} carefully considered to minimize the expected value of a
general quadratic loss function of the wealth in a more general setting where
there is a specified convex constraint on the portfolio over the
trading interval, together with a specified almost-sure lower-bound
on the wealth at close of trade.

The existing theory and methods cannot directly handle the
continuous-time mean-variance problem with the mixed restriction of
both bankruptcy prohibition and convex cone portfolio constraints.
Therefore, we introduce a Hamilton-Jacobi-Bellman (HJB) equation to
analyze this problem. Based on our analysis, we find out that the
market price of risk in policy is actually independent of the wealth
process. This important finding allows us to overcome the difficulty
of the original problem and also makes the similar continuous-time
financial investment problem both interesting and practical. Hence,
we can construct a transformation to tackle the model presented
above using linear-quadratic convex optimization technique. Finally,
we discuss the equivalent problem using the model of Bielecki, Jin,
Pliska and Zhou \cite{BJPZ} without the viscosity solution technique
developed in \cite{LZL}.

This paper is organized as follows. In Section 2,
we formulate the continuous-time mean-variance problem with the mixed restriction of a bankruptcy prohibition and convex cone portfolio constraints.
In Section 3, we transform our mean-variance problem
into an equivalent mean-variance problem with a bankruptcy prohibition \textit{without} convex cone portfolio constraints.
We derive the pre-committed continuous-time efficient
mean-variance policy for our problem using the model proposed by Bielecki,
Jin, Pliska and Zhou \cite{BJPZ} in Section 4.
In Section 5, we discuss properties of the continuous-time mean-variance problems with different constraints.
In Section 6, we present a numerical simulation to illustrate results established in this paper.
Finally, we summarize and conclude our work in Section 7.


\section{Problem Formulation and Preliminaries}

\subsection{Notation}
We make use of the following notation: \\[10pt]
\begin{tabular}{rcl}
$z^+$ & : & the transformation of vector $z$ with every component $z_i^+ = \max\{z_i, 0\}$; \\
$z^-$ & : & the transformation of vector $z$ with every component $z_i^- = \max\{-z_i, 0\}$; \\
$M'$ & : & the transpose of any matrix or vector $M$; \\
$\Vert M\Vert$ & : & $\sqrt{\sum_{i,j}m_{ij}^2}$ for any matrix or vector $M = (m_{ij})$; \\
$\mathbb{R}^m$ & : & $m$ dimensional real Euclidean space; \\
$\mathbb{R}_+^m$ & : & the subset of  $\mathbb{R}^m$ consisting of elements with nonnegative components.
\end{tabular}
\vskip 12pt
The underlying
uncertainty is generated on a fixed filtered complete probability space
$(\Omega, \mathbf{F}, \BP, \{ {\BF}_t \}_{t \geqslant  0})$ on which is defined a
standard $\{ {\BF}_t \}_{t \geqslant  0}$-adapted $m$-dimensional Brownian
motion $W(\cdot) \equiv (W^1(\cdot), \cdots, W^m(\cdot))'$.
Given a probability space $(\Omega, \mathbf{F}, \BP)$ with a filtration
$\{{\BF}_t \vert a \leqslant  t \leqslant  b\} (-\infty <  a < b \leqslant  +\infty)$ and
a Hilbert space $\mathcal{H}$
with the norm $\Vert\cdot\Vert_{\mathcal{H}} $, we can
define a Banach space
\begin{eqnarray*}
L_{\BF}^2(a, b; \mathcal{H}) = \left\{\varphi(\cdot) \bigg\vert
\begin{array}{l}
\varphi(\cdot) \mbox{ is an } {\BF}_t\mbox{-adapted, }
\mathcal{H}\mbox{-valued measurable }\\
\mbox{process on } [a, b]  \mbox{ and } \Vert\varphi(\cdot)\Vert_{\BF}<+\infty
\end{array}
\right\}
\end{eqnarray*}
with the norm
\begin{equation*}
\Vert\varphi(\cdot)\Vert_{\BF}
= \left(\BE\left[\int_a^b\Vert\varphi(t, \omega)\Vert_{\mathcal{H}} ^2\dd t\right]\right)^{\frac{1}{2}}.
\end{equation*}

\subsection{Problem Formulation}
\noindent
Consider an arbitrage-free financial market where $m+1$ assets are
traded continuously on a finite horizon $[0, T]$.
One asset is a \textit{bond}, whose price $S_0(t),\;t\geqslant  0$, evolves
according to the differential equation
\begin{equation*}
\begin{cases}
\dd S_0(t)   =   r(t)S_0(t)\dd t, \quad t \in [0, T], \\
 S_0(0)   =  s_0 > 0,
\end{cases}
\end{equation*}
where $r(t) $ is the interest rate of the bond at time $t$. The remaining $m$ assets are
\textit{stocks}, and their prices are modeled by the system of stochastic differential equations
\begin{equation*}
\begin{cases}
\dd S_i(t) = S_i(t)\{b_i(t)\dd t + \sum_{j=1}^m\sigma_{ij}(t)\dd W^j(t)\}, \quad t \in [0, T], \\
 S_i(0) = s_i > 0,
\end{cases}
\end{equation*}
where $b_i(t) $ is the appreciation rate of the $i$-th stock and
$\sigma_{ij}(t)$ is the volatility coefficient at time $t$.
Denote $b(t):=(b_1(t),\cdots,b_m(t))'$ and $\sigma(t):=(\sigma_{ij}(t))$.
We assume throughout that $r(t)$, $b(t)$ and $\sigma(t)$ are given deterministic, measurable, and uniformly bounded functions on $[0, T]$. In addition, we assume that the non-degeneracy condition on $\sigma(\cdot)$, that is,
\begin{equation}\label{sigmabounded}
y'\sigma(t)\sigma(t)'y \geqslant  \delta y'y, \quad \forall\; (t,y)\in[0,T]\times\R^m,
\end{equation}
is satisfied for some scalar $\delta>0$.  Also, we define the
{excess return vector}
\begin{equation*}
B(t) =  b(t) - r(t)\mathbf{1},
\end{equation*}
where $\mathbf{1}=(1,1,\cdots,1)'$ is an $m$-dimensional   vector.
\par
Suppose an agent has an initial wealth $x_0>0$ and the total wealth of his position at time $t \geqslant  0$ is $X(t)$. Denote by $\pi_i(t)$, $ i = 1, \cdots, m, $ the total market value of the agent's wealth  in the $i^{\mathrm{th}}$  stock at time $t$. We call $\pi(\cdot):=(\pi_1(\cdot),\cdots,\pi_m(\cdot))'\in  L_{\BF}^2(0,T;\R^m)$ a portfolio.
We will consider self-financing portfolios in this paper.
Then it is well-known that $X(t),\;t\geqslant 0$, follows (see, e.g., \cite{ZhouLi})
\begin{equation}\label{eq:state-positive}
\begin{cases}
\dd X(t) = [r(t)X(t) + \pi(t)'B(t)]\dd t + \pi(t)'\sigma(t)\dd W(t),  \\
 X(0) =x_0.
\end{cases}
\end{equation}
\par
An important restriction considered in this paper is the convex cone portfolio constraints, that is $\pi(\cdot)\in\BC$, where
\begin{align}\label{defi:C}
\BC=\big\{\pi(\cdot): \pi(\cdot)\in L_{\BF}^2(0,T;\R^m), \; C(t)'\pi(t) \in \mathbb{R}_+^k, \forall\;
t\in[0,T]\big\},
\end{align}
and $C(\cdot):[0,T]\mapsto \R^{m\times k}$ is a given deterministic and measurable function.  
Another important yet relevant restriction considered in this paper is the prohibition of bankruptcy, i.e., we required that
\begin{align}\label{nobankrupt}
X(t) \geqslant 0,\quad \forall\;t\in[0,T].
\end{align}
On the other hand, borrowing from the money market (at the interest rate $r(\cdot)$) is still allowed; that is, the money invested in the bond $\pi_0(\cdot)=X(\cdot)-\sum_{i=1}^m\pi_m(\cdot)$ has no constraint.
\begin{defi}\label{defi:control-positive}
A portfolio $\pi(\cdot )$ is called an admissible control (or portfolio) if $\pi(\cdot) \in \BC$ and the corresponding wealth process  $X(\cdot)$ defined in \eqref{eq:state-positive} satisfies \eqref{nobankrupt}. In this case, the process  $X(\cdot)$   is called an admissible wealth process, and $ (X(\cdot), \pi(\cdot ))$ is called an admissible pair.
\end{defi}
\begin{remark}
In view of the boundedness of $\sigma(\cdot)$ and the non-degeneracy condition \eqref{sigmabounded}, we have that
$\pi (\cdot)\in L_{\cal F}^2(a, b; \mathbb{R}^m)$ if and only if $\sigma(\cdot)'\pi(\cdot) \in L_{\cal F}^2(a, b; \mathbb{R}^m)$. The latter is often used to define the admissible process in the literature, for instance, Bielecki, Jin, Pliska and Zhou \cite{BJPZ}.
\end{remark}
\begin{remark}\label{remark:interval}
It is easy to show that both the set of all admissible controls and the set of all admissible wealth processes are convex.
In particular, the set of expected terminal wealths \[\big\{\BE[X(T)]:\text{$X(\cdot)$ is an admissible process}\big\}\]
is an interval.
\end{remark}
\par
Mean-variance portfolio selection refers to the problem of, given a favorable mean level $d$,  finding an
allowable investment policy, (i.e., a dynamic portfolio satisfying all the constraints),
such that the expected terminal wealth  $\BE[X(T)]$ is $d$
while the risk measured by the variance of the terminal wealth
\begin{align*}
\Var(X(T)) = \BE[X(T) - \BE[X(T)]]^2 = \BE[X(T) - d]^2
\end{align*}
is minimized.
\par
We impose throughout this paper the following assumption:
\begin{assmp}\label{ass:d}
The value of the expected terminal wealth $d$ satisfies
$d \geqslant x_0e^{\int_0^T r(s)\dd s}$.
\end{assmp}
\begin{remark}\label{rem:d}
Assumption \ref{ass:d} states that the investor's expected terminal wealth $d$ should be no less
than $x_0e^{\int_0^T r(s)ds}$ which coincides with the amount that he/she would earn if
all of the initial wealth is invested in the bond for the entire investment period. Clearly, this is a reasonable
assumption, for the solution of the problem under $d < x_0e^{\int_0^T r(s)ds}$
is foolish for rational investors.
\end{remark}
\begin{defi}
The mean-variance portfolio selection problem is formulated as the
following optimization problem parameterized by $d $:
\begin{eqnarray}\label{eq:utility-positive}
\begin{array}{ll}
       \min & \quad\Var (X(T))= \BE[X(T) - d]^2, \\ [2mm]
         \mathrm{subject\ to} &\quad
       \left\{\begin{array}{l}
       \BE[X(T)] = d, \\ [2mm]
       \pi(\cdot) \in \BC \mbox{ and } X(\cdot) \geqslant  0, \\ [2mm]
       (X(\cdot), \pi(\cdot )) ~ \hbox{\rm satisfies equation \eqref{eq:state-positive}}.
       \end{array}\right.
\end{array}
\end{eqnarray}
An optimal control satisfying \eqref{eq:utility-positive} is called
an efficient strategy, and $(\Var(X(T)),d)$, where $\Var(X(T))$ is
the optimal value of \eqref{eq:utility-positive} corresponding to $d$, is called an efficient
point. The set of all efficient points,
when the parameter $d$ runs over all possible values, is called
the efficient frontier.
\end{defi}
In the current setting, the admissible controls belong to a convex cone, so the expectation of the final outcome  may not be arbitrary.
Denote by $V(d)$ the optimal value of problem \eqref{eq:utility-positive}.
Denote
\begin{align}
  \widehat{d}=\sup\big\{\BE[X(T)]:\text{$X(\cdot)$ is an admissible process}\big\}.
\end{align}
%
Taking $\pi(\cdot)\equiv 0$, we see that $X(t)=x_0e^{\int_0^t r(s)\dd s}$ is an admissible process, so $\widehat{d}\geqslant \BE[X(T)]= x_0e^{\int_0^T r(s)\dd s}$.
The following nontrivial example shows that it is possible that $\widehat{d}= x_0e^{\int_0^T r(s)\dd s}$.
\par
\begin{example}
Let $B(t)=-C(t)\chi$, where $\chi$ is any positive vector of appropriate dimension. Then for any admissible control $\pi(\cdot)\in\BC$, we have $\pi(\cdot)'B(\cdot)=-\pi(\cdot)'C(\cdot)\chi\leqslant 0$. Therefore, by \eqref{eq:state-positive},
$$\dd\;( \BE[X(t)])= (r(t) \BE[X(t)]+\BE[\pi(t)'B(t)])\dd t\leqslant  r(t) \BE[X(t)] \dd t,$$
which implies $\BE[X(T)]\leqslant x_0e^{\int_0^T r(s)\dd s}$. Hence $\widehat{d} = x_0e^{\int_0^T r(s)\dd s}$.
\end{example}
\begin{thm}
Assume that $\widehat{d} = x_0e^{\int_0^T r(s)\dd s}$. Then the optimal value of problem \eqref{eq:utility-positive} is 0.
\end{thm}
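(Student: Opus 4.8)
The plan is to show that the hypothesis $\widehat{d}=x_0 e^{\int_0^T r(s)\dd s}$, combined with Assumption \ref{ass:d}, pins the target mean $d$ down to exactly the risk-free terminal value, and that this single value is realised by a deterministic wealth process carrying no risk at all; the objective in \eqref{eq:utility-positive} is then forced down to its trivial lower bound $0$.

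First I would observe that, by the very definition of $\widehat{d}$ as a supremum, every admissible wealth process obeys $\BE[X(T)]\leqslant\widehat{d}$ (one could also invoke Remark \ref{remark:interval}, which says the attainable means form an interval whose supremum is $\widehat{d}$). Consequently the mean constraint $\BE[X(T)]=d$ in \eqref{eq:utility-positive} is compatible with admissibility only if $d\leqslant\widehat{d}$. Assumption \ref{ass:d} supplies the reverse inequality $d\geqslant x_0 e^{\int_0^T r(s)\dd s}=\widehat{d}$, the last equality being precisely the hypothesis of the theorem. Combining the two yields $d=\widehat{d}=x_0 e^{\int_0^T r(s)\dd s}$.

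Next I would exhibit the all-bond strategy $\pi(\cdot)\equiv 0$ as a feasible control attaining variance zero. It lies in $\BC$ because $C(t)'0=0\in\R_+^k$, and by \eqref{eq:state-positive} it produces the deterministic, strictly positive wealth $X(t)=x_0 e^{\int_0^t r(s)\dd s}$, so the bankruptcy prohibition \eqref{nobankrupt} holds. Since $X(T)=x_0 e^{\int_0^T r(s)\dd s}=d$ by the previous step, all constraints of \eqref{eq:utility-positive} are met, and $\Var(X(T))=0$ because $X(T)$ is a constant.

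Finally, because $\Var(X(T))=\BE[X(T)-d]^2\geqslant 0$ for every feasible pair, the value $0$ attained by the all-bond strategy is in fact optimal, which is the assertion. I do not anticipate any genuine obstacle here: the entire content is the observation that the degeneracy $\widehat{d}=x_0 e^{\int_0^T r(s)\dd s}$ collapses the attainable range of means to the single risk-free value, which is realised deterministically. The only points deserving a word of care are the feasibility direction $d\leqslant\widehat{d}$ and the verification that $\pi(\cdot)\equiv 0$ lies in $\BC$; both are immediate.
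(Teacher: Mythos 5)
Your proposal is correct and follows essentially the same route as the paper's own proof: pin down $d=x_0e^{\int_0^T r(s)\dd s}$ from Assumption \ref{ass:d} together with the hypothesis on $\widehat{d}$, then exhibit the all-bond pair $(x_0e^{\int_0^t r(s)\dd s},0)$ as admissible with zero variance. You merely spell out a few routine verifications (feasibility of $\pi\equiv 0$, the direction $d\leqslant\widehat{d}$) that the paper leaves implicit.
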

\begin{proof}
From Assumption \ref{ass:d} and with $\widehat{d} = x_0e^{\int_0^T r(s)\dd s}$, we obtain that the only possible value of $d$ is  $x_0e^{\int_0^T r(s)\dd s}$.
Note that  $(X(t), \pi(t))\equiv (x_0e^{\int_0^t r(s)\dd s}, 0)$ is an admissible pair satisfying the constraint of problem \eqref{eq:utility-positive},
so $V(d)\leqslant \BE[X(T)-d]^2=\BE[x_0e^{\int_0^T r(s)\dd s}-d]^2=0$. The proof is complete.
\end{proof}

From now on we assume $\widehat{d}> x_0e^{\int_0^T r(s)\dd s}$.
Denote $\setd=(0,\widehat{d}~)$ and $\setdp=\big[x_0e^{\int_0^T r(s)\dd s},\widehat{d}~\big)$.
In this case both $\setd$ and $\setdp$ are nonempty intervals.


\begin{lemma}\label{value:vconvex}
The value function $V(\cdot)$ is convex on $\setd$ and strictly  increasing on $\setdp$.
\end{lemma}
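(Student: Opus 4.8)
The plan is to treat the two assertions separately: convexity will follow from the linear–convex structure of the problem, while strict monotonicity will be obtained by combining convexity with a strict positivity bound derived from the pricing kernel.

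For convexity, fix $d_1,d_2\in\setd$ with $V(d_1),V(d_2)<+\infty$ and $\lambda\in(0,1)$, and set $d_\lambda=\lambda d_1+(1-\lambda)d_2$. Given admissible pairs $(X_1,\pi_1)$ and $(X_2,\pi_2)$ that are $\varepsilon$-optimal for $d_1$ and $d_2$, I would form $\pi_\lambda=\lambda\pi_1+(1-\lambda)\pi_2$ and $X_\lambda=\lambda X_1+(1-\lambda)X_2$. Since the state equation \eqref{eq:state-positive} is linear in $(X,\pi)$, the pair $(X_\lambda,\pi_\lambda)$ again solves \eqref{eq:state-positive} with $X_\lambda(0)=x_0$; since $\BC$ is a convex cone and $X_1,X_2\geqslant0$, we have $\pi_\lambda\in\BC$ and $X_\lambda\geqslant0$; and $\BE[X_\lambda(T)]=d_\lambda$ by linearity. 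Hence $(X_\lambda,\pi_\lambda)$ is admissible for the mean level $d_\lambda$. The objective is then controlled by convexity of $z\mapsto z^2$: pointwise $(X_\lambda(T)-d_\lambda)^2\leqslant\lambda(X_1(T)-d_1)^2+(1-\lambda)(X_2(T)-d_2)^2$, so taking expectations and letting $\varepsilon\downarrow0$ yields $V(d_\lambda)\leqslant\lambda V(d_1)+(1-\lambda)V(d_2)$.

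For strict monotonicity, write $d_0=x_0e^{\int_0^T r(s)\dd s}$ and introduce the market price of risk $\theta(t)=\sigma(t)^{-1}B(t)$, which is well defined and bounded by \eqref{sigmabounded}, together with the density $Z(T)=\exp\!\big(-\int_0^T\theta(s)'\dd W(s)-\tfrac12\int_0^T\|\theta(s)\|^2\dd s\big)$, a true martingale with $\BE[Z(T)]=1$. The key step is the budget inequality: for any admissible $(X,\pi)$, an Itô computation shows that the deflated wealth $e^{-\int_0^t r(s)\dd s}Z(t)X(t)$ is a nonnegative local martingale, hence a supermartingale, which gives $\BE[Z(T)X(T)]\leqslant d_0$. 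For an admissible process with $\BE[X(T)]=d$, since $\BE[1-Z(T)]=0$ I then get $\BE[(1-Z(T))(X(T)-d)]=d-\BE[Z(T)X(T)]\geqslant d-d_0$, and Cauchy–Schwarz yields the explicit bound $\Var(X(T))\geqslant(d-d_0)^2/\Var(Z(T))$. Because $\widehat d>d_0$ forces $B\not\equiv0$, hence $\theta\not\equiv0$ and $\Var(Z(T))=e^{\int_0^T\|\theta(s)\|^2\dd s}-1>0$, this gives $V(d)\geqslant(d-d_0)^2/\Var(Z(T))>0$ for every $d>d_0$, whereas $V(d_0)=0$ via $\pi\equiv0$; finiteness of $V$ on $\setdp$ follows from Remark \ref{remark:interval}.

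Finally I would convert positivity into strict monotonicity using convexity. For $d_0\leqslant d_1<d_2<\widehat d$, writing $d_1=\frac{d_2-d_1}{d_2-d_0}d_0+\frac{d_1-d_0}{d_2-d_0}d_2$ and applying convexity with $V(d_0)=0$ gives $V(d_1)\leqslant\frac{d_1-d_0}{d_2-d_0}V(d_2)<V(d_2)$, the last inequality because $V(d_2)>0$ and the coefficient is strictly less than $1$; this is precisely strict monotonicity on $\setdp$. I expect the main obstacle to be the strict positivity of $V$ away from $d_0$: convexity by itself only yields monotonicity, and ruling out a degenerate admissible policy that attains a mean above the risk-free growth with vanishing variance genuinely requires the no-arbitrage input, supplied cleanly here by the supermartingale property of the deflated wealth and the resulting variance bound.
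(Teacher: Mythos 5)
Your proof is correct, and while the convexity half coincides with the paper's argument (convex combination of near-optimal admissible pairs, linearity of the state equation, convexity of the cone and of $z\mapsto z^2$), the monotonicity half takes a genuinely different route. Both you and the authors reduce strict increase on $\setdp$ to the combination ``convexity $+$ $V(d_0)=0$ $+$ $V>0$ on $\big(d_0,\widehat d~\big)$'' with $d_0=x_0e^{\int_0^T r(s)\dd s}$, but you establish the positivity differently: the paper simply observes that the cone-constrained value dominates the unconstrained value and cites Bielecki--Jin--Pliska--Zhou for the positivity of the latter on $\big(d_0,\widehat d~\big)$, whereas you prove it from scratch via the deflator $e^{-\int_0^t r(s)\dd s}Z(t)X(t)$, whose nonnegative-local-martingale (hence supermartingale) property yields the budget bound $\BE[Z(T)X(T)]\leqslant d_0$ and, by Cauchy--Schwarz, the quantitative estimate $\Var(X(T))\geqslant (d-d_0)^2/\big(e^{\int_0^T\Vert\theta(s)\Vert^2\dd s}-1\big)$; your observation that $\widehat d>d_0$ forces $\theta\not\equiv 0$ correctly closes the loop. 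Your version buys a self-contained, explicit lower bound on the whole efficient frontier that does not depend on the external reference (and in fact ignores the cone constraint entirely, which is harmless since dropping a constraint can only lower the value); the paper's version is shorter but leans on \cite{BJPZ}. The only cosmetic caveat is that your final convex-combination step should (and does, implicitly) split the cases $d_1=d_0$ and $d_1>d_0$, since the strict inequality comes from $V(d_2)>0$ in the first case and from the coefficient $\frac{d_1-d_0}{d_2-d_0}<1$ together with $V(d_2)>0$ in the second.
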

\begin{proof}
Let  $ (\overline{X}(\cdot), \overline{\pi}(\cdot ))$ and   $(\widetilde{X}(\cdot), \widetilde{\pi}(\cdot ))$
be any two admissible pairs such that $d_1=\BE[\overline{X}(T)]$  and $d_2=\BE[ \widetilde{X}(T)] $ are different and both in $\setd$.
For any $0<\alpha<1$, define $ ( \widehat{X}(\cdot), \widehat{\pi}(\cdot )) = \big(\alpha\overline{X}(\cdot)+(1-\alpha)\widetilde{X}(\cdot), \alpha\overline{\pi}(\cdot )+ (1-\alpha)\widetilde{\pi}(\cdot )\big)$. Then  $ ( \widehat{X}(\cdot), \widehat{\pi}(\cdot )) $ satisfies \eqref{eq:state-positive} such that $ \widehat{\pi}(\cdot )\in\BC$, $\widehat{X}(\cdot)\geqslant  0$ and $\BE[\widehat{X}(T)] =\alpha d_1+(1-\alpha)d_2\in\setd$, that is, $ ( \widehat{X}(\cdot), \widehat{\pi}(\cdot )) $ is an admissible pair.
 Therefore,
\begin{multline*}
V\left(\alpha d_1+(1-\alpha)d_2\right)\leqslant   \Var( \widehat{X}(T))
 =\Var(\alpha\overline{X}(\cdot)+(1-\alpha)\widetilde{X}(\cdot))\\
 \leqslant   \alpha \Var(\overline{X}(T))+(1-\alpha)\Var(\widetilde{X}(T)),
\end{multline*}
where we used the convexity of square function to get the last inequality.
Because $(\overline{X}(\cdot), \overline{\pi}(\cdot ))$ and $(\widetilde{X}(\cdot), \widetilde{\pi}(\cdot ))$ are arbitrary chosen, we conclude that
\begin{align*}
V\left(\alpha d_1+(1-\alpha)d_2\right)\leqslant    \alpha V(d_1)+ (1-\alpha)V(d_2).
\end{align*}
This establishes the  convexity of $V(\cdot)$.
\par
Taking $\pi(\cdot)\equiv 0$, we see that $X(T)=x_0e^{\int_0^T r(s)\dd s}$. This clearly implies that
\begin{align*}
V\left( x_0e^{\int_0^T r(s)\dd s}\right) =0.
\end{align*}
It is known  that if there are no portfolio constraints (i.e. $C(t)=0$ for all $t\in[0,T]$), then the optimal value is positive on  $  \setdp$ (see \cite{BJPZ}),  so $V(\cdot)$ must be positive on  $\setdp$. Using the convexity, we conclude that $V(\cdot)$ is strictly increasing on  $\setdp$.
\end{proof}
\begin{coro}
The value function $V(\cdot)$ is finite and continuous on $\setd$.
\end{coro}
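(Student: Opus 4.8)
The plan is to deduce both assertions of the corollary from Lemma \ref{value:vconvex} together with one standard fact of convex analysis: a convex function that is finite-valued on an open interval is automatically continuous (indeed locally Lipschitz) there. Since Lemma \ref{value:vconvex} already supplies convexity of $V(\cdot)$ on the \emph{open} interval $\setd=(0,\widehat{d}~)$, the corollary reduces entirely to proving that $V(d)<+\infty$ for every $d\in\setd$. Once finiteness is in hand, continuity costs nothing and requires no further estimate.

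To establish finiteness I would argue in two steps. First, finite variance is automatic for any admissible pair: since $\pi(\cdot)\in L_{\BF}^2(0,T;\R^m)$ and the coefficients $r(\cdot),B(\cdot),\sigma(\cdot)$ of \eqref{eq:state-positive} are uniformly bounded, the standard a~priori $L^2$ estimate for linear SDEs (Grönwall combined with the Itô isometry) yields $\BE[\sup_{t\in[0,T]}X(t)^2]<+\infty$, whence $X(T)\in L^2(\Omega)$ and $\Var(X(T))<+\infty$. Consequently $V(d)<+\infty$ the moment the constraint set of problem \eqref{eq:utility-positive} is nonempty for that $d$, i.e. the moment $d$ is an achievable expected terminal wealth. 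Thus finiteness of $V(\cdot)$ on $\setd$ is equivalent to feasibility of the mean constraint for each $d\in\setd$.

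Second, I would show every $d\in\setd$ is achievable. By Remark \ref{remark:interval} the set $I=\{\BE[X(T)]:X(\cdot)\text{ admissible}\}$ is an interval with $\sup I=\widehat{d}$; taking $\pi(\cdot)\equiv 0$ places $x_0e^{\int_0^T r(s)\dd s}\in I$, so the interval property immediately gives $\setdp=\big[x_0e^{\int_0^T r(s)\dd s},\widehat{d}~\big)\subseteq I$, and with it $V(\cdot)<+\infty$ on $\setdp$. It then remains to reach means arbitrarily close to $0$. Here I would exhibit a constant-proportion admissible strategy of the form $\pi(t)=\theta\,X(t)\,u(t)$ along a fixed admissible direction $u(\cdot)\in\BC$ carrying a negative expected excess return, for which $X(\cdot)$ is a strictly positive geometric process (so bankruptcy is never triggered) and $\BE[X(T)]\to 0$ as $\theta\to+\infty$; together with the interval property of $I$ this yields $(0,\widehat{d}~)\subseteq I$. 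Finiteness of $V(\cdot)$ on $\setd$ follows, and convexity on the open interval then forces continuity, completing the proof.

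I expect the main obstacle to be precisely this feasibility step for small means, not the continuity: once finiteness is known, continuity on an open interval is a one-line consequence of convexity. The routine part is the $L^2$ bound on terminal wealth; the delicate part is confirming that arbitrarily small positive expected terminal wealths are attainable under the simultaneous restrictions $\pi(\cdot)\in\BC$ and $X(\cdot)\geqslant 0$, which is where the availability of a suitable negative--excess--return direction in the cone $\BC$ (as opposed to the degenerate situation of the Example) must be used.
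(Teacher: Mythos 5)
Your overall route---convexity on an open interval plus finiteness implies continuity, so the whole content of the corollary is the finiteness, i.e.\ feasibility of the mean constraint---is the natural one, and the paper itself offers no argument beyond stating the corollary as an immediate consequence of Lemma \ref{value:vconvex}. Your $L^2$ bound on $X(T)$ for admissible $\pi(\cdot)\in L_{\BF}^2(0,T;\R^m)$ is routine and correct, as is the deduction $\setdp\subseteq I$ from Remark \ref{remark:interval} together with the two reference points $x_0e^{\int_0^T r(s)\dd s}\in I$ and $\sup I=\widehat{d}$.

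The gap is in the last step, reaching means in $\big(0,\,x_0e^{\int_0^T r(s)\dd s}\big)$. Your construction requires a direction $u(\cdot)\in\BC$ with negative expected excess return, and such a direction need not exist even under the standing assumption $\widehat{d}>x_0e^{\int_0^T r(s)\dd s}$. Take the no-shorting cone $\BC_t=\R^m_+$ with all components of $B(t)$ positive --- precisely the setting of Subsections \ref{sect:5.2}--\ref{sect:5.3} and of the numerical example. Then $\pi(t)'B(t)\geqslant 0$ for every admissible $\pi(\cdot)$, hence $\dd\,\BE[X(t)]\geqslant r(t)\BE[X(t)]\dd t$ and $\BE[X(T)]\geqslant x_0e^{\int_0^T r(s)\dd s}$ for \emph{every} admissible pair; bankruptcy does not help, since this drift inequality for the mean holds whether or not $X(\cdot)$ hits zero. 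So no admissible strategy attains a mean below the riskless benchmark, the constraint set of \eqref{eq:utility-positive} is empty for such $d$, and $V(d)=+\infty$ there under the usual convention. In other words, the step you yourself flag as ``delicate'' is not merely delicate but unprovable in the stated generality, and this points to an imprecision in the corollary rather than a fixable omission in your argument. What can be proved, and what the rest of the paper actually uses, is finiteness and continuity on $\setdp$: finiteness by the interval argument you give, continuity in the interior by convexity, and right-continuity at the left endpoint from convexity together with $V\geqslant 0=V\big(x_0e^{\int_0^T r(s)\dd s}\big)$. Restricted to $\setdp$ (equivalently, to the achievable-mean interval), your proof is complete.
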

\par
Since problem \eqref{eq:utility-positive} is a convex optimization problem,
the mean constraint $\BE[X(T)] = d$ can be dealt with by introducing a Lagrange multiplier.  As well-known, then mean-variance portfolio selection problem \eqref{eq:utility-positive} is meaningful only when $d\in\setdp$. We will focus on this case.
\par
Because $V(\cdot)$ is convex on $\setd$ and strictly increasing at any $d \in \big(x_0e^{\int_0^T r(s)\dd s},\widehat{d}~\big)$, there is a constant $\lambda>  0$ such that $V(x)-2\lambda  x\geqslant  V(d)-2\lambda d$ for all $x\in \setd$, where the factor ${2}$ in front of the multiplier $\lambda $ is introduced   just for convenience.
In this way the portfolio selection problem \eqref{eq:utility-positive} is equivalent to the following problem
\begin{eqnarray}\label{eq:Pmu}
\begin{array}{ll}
       \min &\quad \BE[X(T) - d]^2 -2\lambda(\BE[X(T)] - d), \\ [2mm]
       \mathrm{subject\ to}&\quad
       \left\{\begin{array}{l}
       \pi(\cdot) \in\BC \mbox{ and } X(\cdot) \geqslant  0, \\ [2mm]
       (X(\cdot), \pi(\cdot )) ~ \hbox{\rm satisfies equation \eqref{eq:state-positive}},
       \end{array}\right.
\end{array}
\end{eqnarray}
or equivalently,
\begin{eqnarray*}
\begin{array}{ll}
       \min &\quad \BE[X(T) - (d+\lambda)]^2, \\ [2mm]
       \mathrm{subject\ to}&\quad
       \left\{\begin{array}{l}
       \pi(\cdot) \in\BC \mbox{ and } X(\cdot) \geqslant  0, \\ [2mm]
       (X(\cdot), \pi(\cdot )) ~ \hbox{\rm satisfies equation \eqref{eq:state-positive}}
       \end{array}\right.
\end{array}
\end{eqnarray*}
in the sense that these problems have exactly the same optimal pair if one of them admits one.
\par
We plan to use dynamic programming to study the aforementioned problems, so we
 denote by $\hat V(t,x) $   the optimal value of problem
 \begin{eqnarray}\label{eq:Pmu-1}
\begin{array}{ll}
       \min &\quad \BE[X(T) - (d+\lambda)|\BF_t, X(t)=x]^2, \\ [2mm]
       \mathrm{subject\ to}&\quad
       \left\{\begin{array}{l}
       \pi(\cdot) \in\BC \mbox{ and } X(\cdot) \geqslant  0, \\ [2mm]
       (X(\cdot), \pi(\cdot )) ~ \hbox{\rm satisfies equation \eqref{eq:state-positive}}
       \end{array}\right.
\end{array}
\end{eqnarray}
\begin{lemma}\label{v-convexincreasing}
The  function $\widehat{V}(t, \cdot)$ is strictly decreasing and  convex  on $\big(0, (d+\lambda)e^{-\int_t^T r(s)\dd s}\big]$ for each fixed $t\in[ 0,T]$.
\end{lemma}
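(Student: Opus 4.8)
The plan is to establish three facts about $\widehat V(t,\cdot)$ on the interval $I_t:=\big(0,(d+\lambda)e^{-\int_t^T r(s)\dd s}\big]$ and then deduce strict monotonicity from convexity alone, in the spirit of the proof of Lemma \ref{value:vconvex}. Writing $y^{\ast}:=(d+\lambda)e^{-\int_t^T r(s)\dd s}$ for the right endpoint, these facts are: (i) $\widehat V(t,\cdot)$ is finite and convex on $I_t$; (ii) $\widehat V(t,y^{\ast})=0$; and (iii) $\widehat V(t,x)>0$ for every $x\in(0,y^{\ast})$. Granting these, the strict decrease is purely algebraic: for $0<x_1<x_2\leqslant y^{\ast}$ write $x_2=(1-\beta)x_1+\beta y^{\ast}$ with $\beta=\frac{x_2-x_1}{y^{\ast}-x_1}\in(0,1]$; convexity and (ii) give $\widehat V(t,x_2)\leqslant(1-\beta)\widehat V(t,x_1)$, and since (iii) gives $\widehat V(t,x_1)>0$ and $\beta>0$, this is strictly smaller than $\widehat V(t,x_1)$.

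For (i) I would reproduce the argument of Lemma \ref{value:vconvex} almost verbatim. Given two admissible pairs on $[t,T]$ issued from $(t,x_1)$ and $(t,x_2)$ and a weight $\alpha\in(0,1)$, their $\alpha$-convex combination $(\widehat X,\widehat\pi)$ again solves \eqref{eq:state-positive} (the dynamics are linear), starts from $\alpha x_1+(1-\alpha)x_2$, lies in $\BC$ (a convex cone) and stays nonnegative; since $u\mapsto(u-(d+\lambda))^2$ is convex, taking the infimum over the two constituent pairs yields $\widehat V(t,\alpha x_1+(1-\alpha)x_2)\leqslant\alpha\widehat V(t,x_1)+(1-\alpha)\widehat V(t,x_2)$. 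Finiteness on $I_t$ is immediate because the pure-bond policy $\pi\equiv0$ produces the admissible value $(xe^{\int_t^T r(s)\dd s}-(d+\lambda))^2<\infty$. Fact (ii) is the same pure-bond policy evaluated at $x=y^{\ast}$: then $X(T)=y^{\ast}e^{\int_t^T r(s)\dd s}=d+\lambda$ deterministically, so the objective vanishes.

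The heart of the matter, and the step I expect to be the main obstacle, is the strict positivity (iii), where the bankruptcy prohibition and the market structure enter. I would introduce the state-price density $\xi(s)=\exp\big\{-\int_t^s\big(r(u)+\tfrac12\|\theta(u)\|^2\big)\dd u-\int_t^s\theta(u)'\dd W(u)\big\}$ on $[t,T]$, where $\theta(\cdot):=\sigma(\cdot)^{-1}B(\cdot)$ is the bounded market price of risk, well defined since \eqref{sigmabounded} forces $\sigma(\cdot)$ to be invertible. A direct It\^o computation, using the cancellation $\theta'\sigma'\pi=\pi'B$, shows that the drift of $\xi X$ vanishes for every self-financing pair, so $\xi X$ is a local martingale; this uses only \eqref{eq:state-positive}, not the cone constraint. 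Because $X(\cdot)\geqslant0$ and $\xi(\cdot)>0$, the product $\xi X$ is a nonnegative local martingale, hence a supermartingale, giving $\BE[\xi(T)X(T)\,|\,\BF_t]\leqslant\xi(t)X(t)=x$. Since $\theta$ is bounded, $\BE[\xi(T)\,|\,\BF_t]=e^{-\int_t^T r(s)\dd s}$ and $\BE[\xi(T)^2\,|\,\BF_t]<\infty$. Hence for any admissible terminal wealth, $\BE[\xi(T)((d+\lambda)-X(T))\,|\,\BF_t]\geqslant(d+\lambda)e^{-\int_t^T r(s)\dd s}-x=y^{\ast}-x>0$, and Cauchy--Schwarz then yields $\BE[(X(T)-(d+\lambda))^2\,|\,\BF_t]\geqslant(y^{\ast}-x)^2/\BE[\xi(T)^2\,|\,\BF_t]$; taking the infimum over admissible policies gives $\widehat V(t,x)\geqslant(y^{\ast}-x)^2/\BE[\xi(T)^2\,|\,\BF_t]>0$.

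Two technical points deserve care and are where the real work sits. First, I would justify the supermartingale inequality rigorously: a nonnegative local martingale is a supermartingale by Fatou, so no integrability beyond $\pi(\cdot)\in L_{\BF}^2(0,T;\R^m)$ is required, but I would verify that the stochastic integral in $\xi X$ is genuinely a local martingale under the standing boundedness hypotheses. Second, the Cauchy--Schwarz lower bound is precisely what lets me avoid proving that the infimum defining $\widehat V$ is attained: it bounds every admissible policy below by the same positive constant, so positivity of the infimum is automatic. With (i)--(iii) in hand, the convexity argument of the first paragraph closes the proof.
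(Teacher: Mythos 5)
Your proposal is correct, and its skeleton is exactly what the paper intends: the paper's proof of Lemma \ref{v-convexincreasing} is a one-line deferral to the argument of Lemma \ref{value:vconvex}, i.e.\ convexity via convex combinations of admissible pairs (linearity of \eqref{eq:state-positive}, convexity of the cone and of the nonnegativity constraint), the value zero at the right endpoint via the pure-bond policy (recorded separately by the paper as a boundary-value lemma), strict positivity in the interior, and then the purely algebraic passage from convexity plus these two facts to strict monotonicity --- your computation with $\beta=\frac{x_2-x_1}{y^{\ast}-x_1}$ is the correct mirror image of the ``strictly increasing'' step in Lemma \ref{value:vconvex}. Where you genuinely depart from the paper is the positivity step: the paper (in Lemma \ref{value:vconvex}) obtains positivity by citing the unconstrained-portfolio result of Bielecki et al.\ \cite{BJPZ} and noting that imposing the cone constraint can only increase the infimum, whereas you prove it from first principles via the state-price density $\xi$ built from the unconstrained market price of risk $\theta=\sigma^{-1}B$: the identity $\theta'\sigma'\pi=\pi'B$ makes $\xi X$ driftless for \emph{every} self-financing pair irrespective of the cone, nonnegativity of $X$ upgrades the local martingale to a supermartingale, the budget inequality $\BE[\xi(T)X(T)\,|\,\BF_t]\leqslant x$ follows, and Cauchy--Schwarz converts the gap $y^{\ast}-x>0$ into the uniform lower bound $(y^{\ast}-x)^2/\BE[\xi(T)^2\,|\,\BF_t]$ on the objective of every admissible policy, which survives the infimum. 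Your route is self-contained and even quantitative (it exhibits an explicit positive lower bound for $\widehat V(t,x)$ using only the boundedness of the coefficients and \eqref{sigmabounded}), at the cost of some stochastic calculus; the paper's route is shorter but rests on the external result of \cite{BJPZ}. Both are sound.
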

\begin{proof}
The proof is similar to that of  Lemma \ref{value:vconvex}. We leave the proof for the interested readers.
\end{proof}

\begin{remark}
If the initial wealth  $X(t)=x$ is too big, then as well-known the mean-variance portfolio selection problem \eqref{eq:Pmu-1} is not meaningful. This make us focus on the small initials in $\big(0, (d+\lambda)e^{-\int_t^T r(s)\dd s}\big]$.
\end{remark}

\begin{lemma}\label{X=0}
If $X(\cdot)$ is a feasible wealth process with $X(t)=0$ for some $t\in[0,T]$, then
$X(s)=0$ for all $s\in[t,T]$.
\end{lemma}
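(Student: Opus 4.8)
The plan is to exploit the martingale structure of the deflated wealth process rather than to argue directly with the SDE \eqref{eq:state-positive}. Define the discount factor $\rho(s)=e^{-\int_0^s r(u)\dd u}$ and, using the non-degeneracy condition \eqref{sigmabounded} (which makes $\sigma(\cdot)$ invertible with bounded inverse), the bounded market price of risk $\theta(s)=\sigma(s)^{-1}B(s)$. With these I would introduce the state-price density $Z(s)=\exp\left(-\int_0^s\theta(u)'\dd W(u)-\tfrac12\int_0^s\|\theta(u)\|^2\dd u\right)$, which is a strictly positive continuous martingale because $\theta(\cdot)$ is bounded.

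First I would compute, via It\^o's product rule applied to \eqref{eq:state-positive}, the dynamics of the deflated wealth $M(s):=\rho(s)Z(s)X(s)$. The key (and routine) observation is that the two drift contributions cancel exactly: the $\pi'B$ term arising from $\dd(\rho X)$ is annihilated by the cross-variation term $-\pi'\sigma\theta=-\pi'B$ coming from $\dd\langle Z,\rho X\rangle$, since $\sigma\theta=B$. Hence $M(\cdot)$ carries no $\dd s$ term and is a local martingale. Because $\rho(\cdot)>0$, $Z(\cdot)>0$ and $X(\cdot)\geqslant 0$ by feasibility \eqref{nobankrupt}, the process $M(\cdot)$ is a nonnegative local martingale, and therefore (by Fatou's lemma) a supermartingale on $[0,T]$.

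Next I would invoke the absorption-at-zero property of nonnegative supermartingales. Set $\tau=\inf\{s\in[0,T]:X(s)=0\}$. For each fixed $s$, applying the optional sampling theorem to the bounded stopping time $\tau\wedge s$ gives $\BE[M(s)\,|\,\BF_{\tau\wedge s}]\leqslant M(\tau\wedge s)$, which equals $0$ on the event $\{\tau\leqslant s\}$; since $M(s)\geqslant 0$, this forces $M(s)=0$ almost surely on $\{\tau\leqslant s\}$. Letting $s$ range over the rationals of $[0,T]$ and using path-continuity then upgrades this to $M(s)=0$ for all $s\in[\tau,T]$. Finally, because $\rho(s)>0$ and $Z(s)>0$ almost surely, the identity $M(s)=0$ is equivalent to $X(s)=0$; and any time $t$ with $X(t)=0$ satisfies $\tau\leqslant t$, so $[t,T]\subseteq[\tau,T]$ and $X\equiv 0$ on $[t,T]$, as claimed.

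I expect the main obstacle to be the rigorous justification of the absorption step, not the drift cancellation: one must apply optional sampling to a process that is a priori only a nonnegative local martingale (so integrability is obtained only through the nonnegativity, via the supermartingale upgrade), and then carefully pass from ``$M(s)=0$ for each fixed rational $s$'' to ``$M\equiv 0$ on the whole interval'' by right-continuity. A point worth recording is that the cone constraint $\pi(\cdot)\in\BC$ plays \emph{no} role in this lemma: the conclusion rests solely on the bankruptcy prohibition \eqref{nobankrupt} together with the linear dynamics \eqref{eq:state-positive}.
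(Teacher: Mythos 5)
Your proof is correct, but it takes a genuinely different route from the paper's. The paper disposes of this lemma in one sentence: since the market is assumed arbitrage-free, if $X(t)=0$ and $\BP(X(s)>0)>0$ for some later $s$, the strategy would constitute an arbitrage from zero initial capital --- and no further justification is given. You instead make that claim rigorous from the primitives: you build the deflator $\rho(s)Z(s)$ explicitly (legitimate here because \eqref{sigmabounded} and the boundedness of the coefficients give a bounded market price of risk, so $Z$ is a true martingale), verify the drift cancellation so that $M=\rho Z X$ is a nonnegative local martingale, upgrade it to a supermartingale via Fatou, and then invoke the absorption-at-zero property through optional sampling at $\tau\wedge s$ followed by the rationals-plus-continuity argument. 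All of these steps are sound, including the subtle ones you flag (integrability of $M$ coming only from nonnegativity, and the passage from countably many times to the whole interval). What your approach buys is a self-contained proof that does not lean on an unproved ``this would be an arbitrage'' assertion, and it isolates exactly which hypotheses matter --- your closing remark that the cone constraint $\pi(\cdot)\in\BC$ is irrelevant and only \eqref{nobankrupt} plus the linear dynamics are used is accurate and consistent with how the paper later applies the lemma at stopping times. What the paper's one-liner buys is brevity, at the cost of leaving the verification implicit.
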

\begin{proof}
Since $X(\cdot)$ is a feasible wealth process, we have $X(s)\geqslant 0$, for all $s\in[t,T]$. If $\BP(X(s)>0)$ is positive for some $s\in[t,T]$, then this would lead to an arbitrage opportunity.
\end{proof}

\begin{lemma}\label{v-boundary values}
We have that $\widehat{V}(t,0)=(d+\lambda)^2$ and $\widehat{V}\left(t, (d+\lambda)e^{-\int_t^T r(s)\dd s}\right) = 0$  for all $t\in[0,T]$.
\end{lemma}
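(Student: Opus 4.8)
The plan is to verify the two boundary identities directly: the left endpoint $\widehat V(t,0)$ via the absorption-at-zero property recorded in Lemma \ref{X=0}, and the right endpoint by exhibiting an explicit optimal control. For the left endpoint, first I would note that the trivial portfolio $\pi(\cdot)\equiv 0$ is admissible and, started from $X(t)=0$, the state equation \eqref{eq:state-positive} reduces to $\dd X(s)=r(s)X(s)\dd s$ with $X(t)=0$, forcing $X(s)\equiv 0$ on $[t,T]$; hence the admissible set is nonempty. More importantly, Lemma \ref{X=0} asserts that \emph{every} feasible wealth process with $X(t)=0$ satisfies $X(s)=0$ for all $s\in[t,T]$, so $X(T)=0$ almost surely. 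Consequently the objective takes the identical value $\BE[0-(d+\lambda)]^2=(d+\lambda)^2$ for every admissible control, and taking the infimum over controls yields $\widehat V(t,0)=(d+\lambda)^2$.

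For the right endpoint I would again take $\pi(\cdot)\equiv 0$, which lies in $\BC$ since $C(s)'\cdot 0=0\in\mathbb{R}_+^k$. Starting from $X(t)=(d+\lambda)e^{-\int_t^T r(s)\dd s}$, equation \eqref{eq:state-positive} reduces to the deterministic dynamics $\dd X(s)=r(s)X(s)\dd s$, whose solution $X(s)=X(t)e^{\int_t^s r(u)\dd u}$ is strictly positive, so the bankruptcy prohibition \eqref{nobankrupt} holds. At the terminal time this gives $X(T)=(d+\lambda)e^{-\int_t^T r(s)\dd s}e^{\int_t^T r(u)\dd u}=d+\lambda$, so this control drives the objective to $\BE[X(T)-(d+\lambda)]^2=0$. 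Since the objective is a squared expectation and therefore nonnegative, this control is optimal and $\widehat V\!\left(t,(d+\lambda)e^{-\int_t^T r(s)\dd s}\right)=0$.

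There is little genuine difficulty here: both values follow from the no-arbitrage absorption at zero wealth (Lemma \ref{X=0}) and from the purely deterministic growth of a risk-free investment. The only point deserving care is to confirm feasibility at each endpoint—namely that $\pi\equiv 0\in\BC$ and that the induced wealth stays nonnegative throughout $[t,T]$—so that each boundary value is genuinely attained rather than merely approximated, and that the conditional objective is evaluated along an almost-surely constant terminal wealth.
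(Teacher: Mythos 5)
Your proof is correct and takes essentially the same route as the paper's: Lemma \ref{X=0} forces $X(T)=0$ at the left endpoint, and the zero portfolio attains terminal wealth $d+\lambda$ (hence value $0$) at the right endpoint. The additional feasibility checks and the explicit remark that the objective is nonnegative merely make explicit what the paper leaves implicit.
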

\begin{proof}
If $X(t)=0$, then $X(T)=0$ by Lemma \ref{X=0}. Hence, $\widehat{V}(t,0)=(d+\lambda)^2$.
\par
Suppose $X(t)=(d+\lambda)e^{-\int_t^T r(s)\dd s}$. Then taking $\pi(\cdot)\equiv 0$, we obtain that $X(T)=d+\lambda$, so
$\widehat{V}\left(t, (d+\lambda)e^{-\int_t^T r(s)\dd s}\right) \leqslant \BE[X(T)-(d+\lambda)]^2=0$. The proof is complete.
\end{proof}


\section{An Equivalent Stochastic Problem}
\noindent
Since the Riccati equation approach to solve problem \eqref{eq:Pmu-1}  is not applicable in this case,
we consider the corresponding Hamilton-Jacobi-Bellman (HJB) equation. This
is the following partial differential equation:
\begin{equation}\label{eq:HJB-positive}
\begin{cases}
\mathcal{L}v= 0, \hfill \quad (t,x)\in\sets,\\
v\left(t, (d+\lambda)e^{-\int_t^T r(s)\dd s}\right) = 0, \quad v(t,0)=(d+\lambda)^2,\quad\qquad\hfill 0 \leqslant  t \leqslant  T, \\
v(T, x) =  \big(x-(d+\lambda)\big)^2, \hfill 0<x<d+\lambda,
\end{cases}
\end{equation}
where
\begin{align*}
\mathcal{L}v=&v_t(t, x) + \inf\limits_{\pi \in \BC_t}\Big\{v_x(t, x)[r(t)x + \pi'B(t)]
+ \frac{1}{2}v_{xx}(t, x)\pi'\sigma(t)\sigma(t)'\pi\Big\},\\
\sets=&\left\{(t,x): 0 \leqslant  t < T, \;0<x<(d+\lambda)e^{-\int_t^T r(s)\dd s}\right\},
\end{align*}
and
\begin{align*}
  \BC_t=\{z\in\R^m:C(t)'z\in \mathbb{R}_+^k\}.
\end{align*}
We need the following technical result.

\begin{lemma}\label{v:upperbound}
Suppose problem \eqref{eq:HJB-positive} admits a solution $v\in C^{1,2}(\sets)$ which is convex in the second argument. Then $v\leqslant (d+\lambda)^2$ on $\sets$.
\end{lemma}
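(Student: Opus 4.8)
The plan is to prove this by a weak maximum principle argument for the degenerate parabolic operator $\mathcal{L}$. First I would record that on the entire boundary of $\sets$ the desired bound already holds: on the lower lateral boundary $x=0$ we have $v=(d+\lambda)^2$; on the upper lateral boundary $x=(d+\lambda)e^{-\int_t^T r(s)\dd s}$ we have $v=0$; and on the terminal face $t=T$ we have $v(T,x)=(x-(d+\lambda))^2\leqslant(d+\lambda)^2$ for $0<x<d+\lambda$. Since $r(\cdot)$ is bounded, $\overline{\sets}$ is compact, and since the boundary conditions in \eqref{eq:HJB-positive} specify continuous data I would assume $v$ extends continuously to $\overline{\sets}$, so $v$ attains its maximum there. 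The whole point is then to show this maximum cannot be attained strictly inside in the space variable and away from $t=T$, so that it is controlled by the boundary values.

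The subtle point, and the main obstacle, is that $\mathcal{L}$ is degenerate: because $\BC_t$ is a cone containing $0$, the diffusion term can be switched off by taking $\pi=0$, so no strong maximum principle is available. Moreover, the convexity hypothesis means that at an interior-in-space maximum one has simultaneously $v_{xx}\leqslant 0$ (second-order condition) and $v_{xx}\geqslant 0$ (convexity), hence $v_{xx}=0$; the unperturbed equation $\mathcal{L}v=0$ then only returns $v_t=0$, which is no contradiction. To break this tie I would pass to the strict supersolution $w(t,x)=v(t,x)+\varepsilon t$ with $\varepsilon>0$. Since $w_t=v_t+\varepsilon$, $w_x=v_x$, and $w_{xx}=v_{xx}$, one gets $\mathcal{L}w=\mathcal{L}v+\varepsilon=\varepsilon>0$ on $\sets$, while $w$ remains continuous on $\overline{\sets}$ and satisfies $w\leqslant(d+\lambda)^2+\varepsilon T$ on the boundary pieces listed above.

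Next I would argue by contradiction. Suppose $w$ attains its maximum over $\overline{\sets}$ at a point $(t_0,x_0)$ with $0<x_0<(d+\lambda)e^{-\int_{t_0}^T r(s)\dd s}$ and $0\leqslant t_0<T$. There the first-order condition gives $w_x(t_0,x_0)=0$ and the second-order condition gives $w_{xx}(t_0,x_0)\leqslant 0$ (with convexity this forces $w_{xx}(t_0,x_0)=0$, though only the inequality is needed), while the time condition gives $w_t(t_0,x_0)=0$ if $0<t_0<T$ and $w_t(t_0,x_0)\leqslant 0$ (one-sided derivative) if $t_0=0$. Because $0\in\BC_{t_0}$ and $w_x(t_0,x_0)=0$, the expression $w_x[r(t_0)x_0+\pi'B(t_0)]+\tfrac12 w_{xx}\pi'\sigma(t_0)\sigma(t_0)'\pi=\tfrac12 w_{xx}\pi'\sigma(t_0)\sigma(t_0)'\pi\leqslant 0$ for every $\pi\in\BC_{t_0}$ and vanishes at $\pi=0$, so its infimum is nonpositive; hence $\mathcal{L}w(t_0,x_0)=w_t(t_0,x_0)+(\text{something}\leqslant 0)\leqslant 0$, contradicting $\mathcal{L}w=\varepsilon>0$. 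Therefore the maximum of $w$ lies on $\{x=0\}\cup\{x=(d+\lambda)e^{-\int_t^T r(s)\dd s}\}\cup\{t=T\}$, so $w\leqslant(d+\lambda)^2+\varepsilon T$ throughout $\overline{\sets}$. Consequently $v(t,x)=w(t,x)-\varepsilon t\leqslant(d+\lambda)^2+\varepsilon T$, and letting $\varepsilon\downarrow 0$ yields $v\leqslant(d+\lambda)^2$ on $\sets$. The only genuinely delicate steps to make fully rigorous are the continuity of $v$ up to $\overline{\sets}$ (for compactness) and the one-sided time derivative at $t_0=0$; the cone property $0\in\BC_t$ is exactly what renders the infimum term harmless at the critical point.
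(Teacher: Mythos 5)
Your argument is essentially correct, but it is a genuinely different (and much heavier) route than the one the paper takes. The paper's proof is a one-liner that never touches the PDE: for each fixed $t$, the function $x\mapsto v(t,x)$ is convex on the slice $\big(0,(d+\lambda)e^{-\int_t^T r(s)\dd s}\big)$, and a convex function of one variable is bounded above by the larger of its two boundary values, which here are $v(t,0)=(d+\lambda)^2$ and $v\big(t,(d+\lambda)e^{-\int_t^T r(s)\dd s}\big)=0$; hence $v(t,x)\leqslant(d+\lambda)^2$. By contrast, you run a weak maximum principle for the degenerate operator $\mathcal{L}$ via the strict supersolution $w=v+\varepsilon t$, and --- as you yourself observe --- the only structural facts you actually use are $0\in\BC_t$ (to make the infimum nonpositive at a spatial critical point) and the first/second-order conditions at a maximum; the convexity hypothesis is not really needed in your proof, whereas it is the \emph{entire} content of the paper's proof. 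Each approach buys something: yours would survive without the convexity assumption and is the standard comparison-principle template, but it requires continuity of $v$ up to $\overline{\sets}$ for the compactness step and a careful treatment of the one-sided time derivative at $t_0=0$ (both of which you correctly flag as the delicate points); the paper's argument sidesteps the degenerate operator, the terminal face, and the time variable entirely, needing only that the lateral boundary data are attained as limits of the convex slices. Given that convexity is hypothesized in the statement, the slice-wise convexity argument is the intended and far more economical proof.
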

\begin{proof}
By the convexity of $v$ in the second argument, we have, for each $(t,x)\in\sets$,
\begin{align*}
v(t,x)\leqslant \max\left\{v(t,0),v\left(t, (d+\lambda)e^{-\int_t^T r(s)\dd s}\right) \right\}=(d+\lambda)^2.
\end{align*}
The proof is complete.
\end{proof}

Now we are ready to establish the following result:
\begin{thm}
Suppose problem \eqref{eq:HJB-positive} admits a solution $v\in C^{1,2}(\sets)$ which is convex in the second argument. Then $\widehat{V}= v$ on $\sets$.
\end{thm}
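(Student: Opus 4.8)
The plan is to prove the two inequalities $v \leqslant \widehat{V}$ and $v \geqslant \widehat{V}$ on $\sets$ by a standard verification argument, using the HJB equation \eqref{eq:HJB-positive} as a pointwise differential inequality and feeding in Itô's formula. The structural inputs will be the convexity of $v$ in its second argument (which supplies $v_{xx}\geqslant 0$), the boundary data recorded in Lemma \ref{v-boundary values}, the absorption property of Lemma \ref{X=0}, and the global bound $v\leqslant (d+\lambda)^2$ of Lemma \ref{v:upperbound}.

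First I would establish $v \leqslant \widehat{V}$. Fix $(t,x)\in\sets$ and let $(X(\cdot),\pi(\cdot))$ be an arbitrary admissible pair with $X(t)=x$. Let $\tau$ be the first exit time of $(s,X(s))$ from the open region $\sets$, and apply Itô's formula to $v(s,X(s))$ on $[t,\tau\wedge T]$. Since $\pi(s)\in\BC_s$ is a feasible competitor in the infimum defining $\mathcal{L}v$, the drift $v_t+v_x[rX+\pi'B]+\tfrac12 v_{xx}\pi'\sigma\sigma'\pi$ is $\geqslant \mathcal{L}v=0$ at every point, so the stopped process $v(s\wedge\tau,X(s\wedge\tau))$ is a submartingale. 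After checking that the stochastic integral is a genuine martingale (by localizing with the $C^{1,2}$ bounds on $v_x$ over the region together with $\pi(\cdot)\in L_\BF^2$), taking expectations yields $v(t,x)\leqslant \BE[v(\tau\wedge T,X(\tau\wedge T))]$. I then identify $v$ at the exit point case by case: if the trajectory stays in $\sets$ up to $T$, the terminal condition gives $v(T,X(T))=(X(T)-(d+\lambda))^2$; if it exits through $x=0$, Lemma \ref{X=0} forces $X(T)=0$ and $v=(d+\lambda)^2=(X(T)-(d+\lambda))^2$; and if it exits through the upper curve $x=(d+\lambda)e^{-\int_s^T r(u)\dd u}$, then $v=0\leqslant \BE[(X(T)-(d+\lambda))^2\mid\BF_\tau]$. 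In all cases $v(\tau\wedge T,X(\tau\wedge T))\leqslant \BE[(X(T)-(d+\lambda))^2\mid\BF_{\tau\wedge T}]$, so by the tower property $v(t,x)\leqslant \BE[(X(T)-(d+\lambda))^2]$; minimizing over admissible controls gives $v\leqslant \widehat{V}$.

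For the reverse inequality $v\geqslant \widehat{V}$, I would construct a control attaining $v(t,x)$. Because $v_{xx}\geqslant 0$ and $\BC_s$ is a closed convex cone, the quadratic map $\pi\mapsto v_x[rx+\pi'B]+\tfrac12 v_{xx}\pi'\sigma\sigma'\pi$ attains its infimum over $\BC_s$; I define the feedback $\pi^*(s)=\pi^*(s,X^*(s))$ by this minimizer and let $X^*$ solve \eqref{eq:state-positive} with $X^*(t)=x$. Along $(X^*,\pi^*)$ the HJB equation holds with equality, so $v(s,X^*(s))$ is a local, and then true, martingale up to its exit time $\tau^*$, giving $v(t,x)=\BE[v(\tau^*\wedge T,X^*(\tau^*\wedge T))]$. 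Provided the exit analysis now delivers equality at the exit point — in particular that once $X^*$ reaches the upper curve the optimal feedback keeps it on the bond path so that $X^*(T)=d+\lambda$ and the terminal cost is exactly $0$ — this equals $\BE[(X^*(T)-(d+\lambda))^2]\geqslant \widehat{V}(t,x)$, completing the argument.

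The main obstacle I anticipate is the boundary and exit-time bookkeeping rather than the Itô computation: I must verify that the optimal feedback $\pi^*$ is genuinely admissible (it lies in $\BC$ and keeps $X^*\geqslant 0$), that $X^*$ does not leave the domain in an uncontrolled way, and above all that the free-boundary value $v=0$ on the upper curve is dynamically consistent, i.e.\ that the optimal continuation from that curve realizes zero terminal cost. Supplying the uniform integrability needed to promote the local martingales to true martingales — leaning on Lemma \ref{v:upperbound} to control $v$ and on the $L_\BF^2$ admissibility of the portfolios — is the remaining technical point.
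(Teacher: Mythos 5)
Your proposal is correct and follows essentially the same route as the paper: a two-sided verification argument applying It\^o's formula to $v(s,X(s))$, stopping at the exit from $\sets$, using Lemmas \ref{X=0}, \ref{v:upperbound} and the boundary data for the case analysis at the exit point, and then constructing the minimizing feedback control to obtain the reverse inequality. The boundary issue you flag at the end is exactly the one the paper resolves by setting the feedback portfolio to $0$ on both boundary curves (so the wealth rides the bond path to $X(T)=d+\lambda$ after hitting the upper curve), so your argument closes in the same way.
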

\begin{proof}
Without loss of generality, we shall show $\widehat{V}(0,x_0)= v(0,x_0)$.
\par
Let  $(X(\cdot),\pi(\cdot))$ be the corresponding optimal pair. Define
\begin{align}
\tau= & \,\inf\left\{t\in[0,T]: X(t)=0\text{ or }X(t)=(d+\lambda)e^{-\int_t^T r(s)\dd s}\right\}\wedge T,\\
\tau_N=& \,\sup\left\{t\in[0,T]:\int_0^t\Vert v_x(s,X(s))\pi(s)'\sigma(s)\Vert^2\dd s\leqslant N\right\}\wedge T.
\end{align}
Applying It\^{o}'s Lemma to $v(t,X_t)$ yields
\begin{align*}
& v(\tau\wedge\tau_N ,X(\tau\wedge\tau_N)) \\
=& \int_0^{\tau\wedge\tau_N}\!\!\left(\!\!v_t(t, X(t)) +v_x(t, X(t))[r(t)X(t) + \pi(t)'B(t)]
+ \frac{1}{2}v_{xx}(t, X(t))\pi(t)'\sigma(t)\sigma(t)'\pi(t)\!\!\right)\!\dd t \\
&+\int_0^{\tau\wedge\tau_N}v_x(t, X(t)) \pi(t)'\sigma(t)\dd W(t)+v(0,x_0) \\
\geqslant& \int_0^{\tau\wedge\tau_N} \mathcal{L}v(t,X(t))\dd t +\int_0^{\tau\wedge\tau_N}v_x(t, X(t)) \pi(t)'\sigma(t)\dd W(t)+v(0,x_0) \\
= &\int_0^{\tau\wedge\tau_N}v_x(t, X(t))\pi(t)'\sigma(t)\dd W(t)+v(0,x_0).
\end{align*}
Taking expectation of both sides, we obtain
\begin{align*}
\BE[ v(\tau\wedge\tau_N ,X(\tau\wedge\tau_N))] \geqslant\BE\left[\int_0^{\tau\wedge\tau_N}v_x(t,X(t)) \pi(t)'\sigma(t)\dd W(t)+v(0,x_0)\right]=v(0,x_0).
\end{align*}
Because $v$ is continuous, and $\tau\wedge\tau_N$ and $X(\tau\wedge\tau_N)$ are both uniformly bounded,
letting $N\to\infty$ and applying the dominated convergence theorem, we obtain
\begin{align}\label{v-ineq0}
\BE[ v(\tau  ,X(\tau))] \geqslant v(0,x_0).
\end{align}
If $X(\tau)=0$, then by  Lemma \ref{X=0}, we have $X(T)=0$. Using Lemma \ref{v:upperbound} yields
\begin{align}\label{v-ineq1}
  \BE[v(T,X(T))|\mathcal{F}_{\tau}]\mathbf{1}_{\{X(\tau)=0\}}=(d+\lambda)^2\mathbf{1}_{\{X(\tau)=0\}}\geqslant v(\tau,X(\tau))\mathbf{1}_{\{X(\tau)=0\}}.
\end{align}
If $X(\tau)= (d+\lambda)e^{-\int_\tau^T r(s)\dd s}$, then  $v(\tau  ,X(\tau))=0$.  This trivially leads to
\begin{align}\label{v-ineq2}
  \BE[v(T,X(T))|\mathcal{F}_{\tau}]\mathbf{1}_{\{X(\tau)=(d+\lambda)e^{-\int_\tau^T r(s)\dd s}\}}\geqslant v(\tau,X(\tau))\mathbf{1}_{\{X(\tau)=(d+\lambda)e^{-\int_\tau^T r(s)\dd s}\}}.
\end{align}
If $0<X(\tau) < (d+\lambda)e^{-\int_\tau^T r(s)\dd s}$, then $\tau=T$ by the definition. Hence,
\begin{multline} \label{v-ineq3}
  \BE[v(T,X(T))|\mathcal{F}_{\tau}]\mathbf{1}_{\{0<X(\tau)<(d+\lambda)e^{-\int_\tau^T r(s)\dd s}\}}
 =\BE[v(\tau,X(\tau))|\mathcal{F}_{\tau}]\mathbf{1}_{\{0<X(\tau)<(d+\lambda)e^{-\int_\tau^T r(s)\dd s}\}} \\
=v(\tau,X(\tau))\mathbf{1}_{\{0<X(\tau)<(d+\lambda)e^{-\int_\tau^T r(s)\dd s}\}}. \quad
\end{multline}
From \eqref{v-ineq1}, \eqref{v-ineq2} and \eqref{v-ineq3}, we obtain
\begin{gather*}
  \BE[v(T,X(T)) |\mathcal{F}_{\tau}]\geqslant  v(\tau  ,X(\tau)), \\
 \BE[v(T,X(T))]\geqslant   \BE[v(\tau  ,X(\tau))]\geqslant v(0,x_0),
\end{gather*}
where the last inequality holds due to \eqref{v-ineq0}. Note $v(T,X(T))=(X(T)-(d+\lambda))^2$, so $\widehat{V}(0,x_0)\geqslant v(0,x_0)$.
\par
On the other hand,   define a feasible portfolio in the feedback form,
\begin{equation*}
\pi(t)=\begin{cases}
0, &  \text{ if $X(t)=0$;}\\
0, &  \text{ if $X(t)=(d+\lambda)e^{-\int_t^T r(s)\dd s}$;}\\
\argmin\limits_{\pi \in \BC_t}\Big\{ v_x(t, X(t)) \pi'B(t)+ \frac{1}{2}v_{xx}(t, X(t))\pi'\sigma(t)\sigma(t)'\pi\Big\},  &  \text{ otherwise.}\\
\end{cases}
\end{equation*}
It is not hard to see $(X(\cdot),\pi(\cdot))$ is an admissible pair.
Then we see  that \eqref{v-ineq0},  \eqref{v-ineq1}, \eqref{v-ineq2} and \eqref{v-ineq3} become identities, so
\begin{align*}
\BE[X(T)-(d+\lambda)]^2= \BE[v(T,X(T))]= \BE[v(\tau  ,X(\tau))]= v(0,x_0).
\end{align*}
This implies that $\widehat{V}(0,x_0)\leqslant v(0,x_0)$. The proof is complete.
\end{proof}

Before we go further, we need the following key result.
\begin{lemma}\label{lem:key}
Suppose $\BA \in \mathbb{R}^{m\times k}$, $\BB \in \mathbb{R}^m$,  $\BD \in \mathbb{R}^{m\times m}$
and $\BD\BD' $ is positive definite, and $\BCC=\{z\in\R^m: \BA'z\in \R^k_{+}\}$.
Then, for each fixed scalar $\alpha > 0$, the following two convex optimization problems
\begin{align} \label{min:c_t}
 \min_{z \in \BCC} \tfrac{1}{2}z'\BD\BD'z - \alpha \BB'z
\end{align}
and
\begin{align} \label{min:c_t-mu_t}
\min_{z \in \R^m}  \tfrac{1}{2}z'\BD\BD'z - \alpha \bar{z}'\BD\BD'z
\end{align}
have the same optimal solution $\alpha \bar{z}$ and the same optimal value $ -\frac{1}{2}\alpha^2  \bar{z}\BD'\BD\bar{z} $, where
\begin{align}\label{eq:s1}
\bar{z}=\argmin_{z \in \BCC}  \Vert \BD'z - \BD^{-1}\BB\Vert.
\end{align}
\end{lemma}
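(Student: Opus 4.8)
The plan is to treat both programs as convex quadratic problems---the common Hessian $\BD\BD'$ is positive definite, so each objective is strictly convex, and each feasible region is convex---and to verify directly that the proposed point $\alpha\bar{z}$ satisfies the relevant first-order optimality condition, which for such problems is both necessary and sufficient. I would also isolate at the outset the single algebraic identity that forces the two optimal values to coincide, namely
\[
\bar{z}'\BD\BD'\bar{z}=\BB'\bar{z},
\]
and reduce the whole argument to establishing it.

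First I would dispose of the unconstrained problem \eqref{min:c_t-mu_t}. Its gradient is $\BD\BD'z-\alpha\,\BD\BD'\bar{z}$, and since $\BD\BD'$ is invertible this vanishes exactly at $z=\alpha\bar{z}$; strict convexity makes this the unique minimizer. Substituting back gives the value $\tfrac{1}{2}\alpha^2\bar{z}'\BD\BD'\bar{z}-\alpha^2\bar{z}'\BD\BD'\bar{z}=-\tfrac{1}{2}\alpha^2\bar{z}'\BD\BD'\bar{z}$, with no further input needed.

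For the conic problem \eqref{min:c_t} I would exploit two facts. Since $\BCC$ is a cone and $\alpha>0$, the substitution $z=\alpha y$ turns the objective into $\alpha^2\big(\tfrac12 y'\BD\BD'y-\BB'y\big)$ over $y\in\BCC$, so it suffices to minimize $f(y):=\tfrac12 y'\BD\BD'y-\BB'y$ on $\BCC$ and then rescale by $\alpha$. Completing the square, and using $y'\BD\BD^{-1}\BB=y'\BB$, yields
\[
f(y)=\tfrac12\big\|\BD'y-\BD^{-1}\BB\big\|^2-\tfrac12\,\BB'(\BD\BD')^{-1}\BB,
\]
where the trailing constant is independent of $y$. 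Hence minimizing $f$ over $\BCC$ is the same as minimizing $\|\BD'y-\BD^{-1}\BB\|$ over $\BCC$, whose solution is $\bar{z}$ by \eqref{eq:s1}. Rescaling, the optimal solution of \eqref{min:c_t} is $\alpha\bar{z}$, matching \eqref{min:c_t-mu_t}.

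It remains to pin down the value, and this is where the cone structure does the real work and where I expect the main obstacle to lie. The variational inequality characterizing the minimizer $\bar{z}$ of the convex function $f$ over the convex cone $\BCC$ reads $\big(\BD\BD'\bar{z}-\BB\big)'(y-\bar{z})\geqslant 0$ for every $y\in\BCC$. Because $\BCC$ is a cone, both $0$ and $2\bar{z}$ lie in $\BCC$; testing the inequality against these two choices forces $\big(\BD\BD'\bar{z}-\BB\big)'\bar{z}=0$, i.e.\ the identity $\bar{z}'\BD\BD'\bar{z}=\BB'\bar{z}$. Feeding this into the value of \eqref{min:c_t} at $\alpha\bar{z}$, namely $\tfrac12\alpha^2\bar{z}'\BD\BD'\bar{z}-\alpha^2\BB'\bar{z}$, collapses it to $-\tfrac12\alpha^2\bar{z}'\BD\BD'\bar{z}$, agreeing with the value already found for \eqref{min:c_t-mu_t}. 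Extracting this orthogonality relation from the conic optimality condition is the crux; everything else is routine completing-the-square and quadratic minimization.
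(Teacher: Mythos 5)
Your proof is correct, and it diverges from the paper's at the one step that actually matters. The paper also reduces to $\alpha=1$ and identifies $\bar z$ as the minimizer of $\tfrac12 z'\BD\BD'z-\BB'z$ over $\BCC$ (leaving the completing-the-square link to \eqref{eq:s1} implicit, which you spell out), but it then introduces a Lagrange multiplier $\nu\in\R^{k}$ for the conic constraint: it posits that the constrained problem shares its solution and optimal value with the unconstrained problem $\min_{z}\tfrac12 z'\BD\BD'z-\BB'z-\nu'\BA'z$, reads off $\BD\BD'\bar z=\BB+\BA\nu$ from the first-order condition, and substitutes to arrive at \eqref{min:c_t-mu_t}. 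That argument tacitly relies on the existence of a KKT multiplier and on complementary slackness $\nu'\BA'\bar z=0$, neither of which is justified in the paper. You instead work directly with the variational inequality $(\BD\BD'\bar z-\BB)'(y-\bar z)\geqslant 0$ for $y\in\BCC$ and test it at $y=0$ and $y=2\bar z$ (both admissible because $\BCC$ is a cone) to extract the orthogonality relation $\bar z'\BD\BD'\bar z=\BB'\bar z$, which is precisely the identity the paper's multiplier manipulation encodes. Your route is more elementary and self-contained, avoiding any appeal to strong duality for the polyhedral cone; what the paper's route buys in exchange is the explicit multiplier $\nu$, which reappears concretely as the vector $\lambda(t)$ in the no-shorting specialization of Section 5.2.
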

\begin{proof}
Because $\BCC$ is a cone, it is sufficient to study the case $\alpha=1$.
From the definition, $\bar{z}$ solves
\begin{align}\label{eq:min-cone}
 \min_{z \in\BCC} \tfrac{1}{2}z'\BD\BD'z - \BB'z.
\end{align}
By the definition of  $\BCC$, the above problem is equivalent to
\begin{align*}
 \min_{z\in\R^m, \BA'\!z\in \R^k_{+}} \tfrac{1}{2}z'\BD\BD'z - \BB'z.
\end{align*}
Introducing a Lagrange multiplier yields the following unconstrained problem
\begin{align}\label{eq:min-unconstrint}
 \min_{z \in \R^m} \tfrac{1}{2}z'\BD\BD'z -   \BB'z-\nu'\BA'z.
\end{align}
Problems (\ref{eq:min-cone}) and (\ref{eq:min-unconstrint}) should have the same unique solution $\bar{z}$ and optimal value, for some $\nu\in\R^{k}$.
Since $(\BD\BD')^{-1}(\BB+\BA\nu)$  is the unique solution to problem (\ref{eq:min-unconstrint}),
we have $\bar{z}=(\BD\BD')^{-1}(\BB+\BA\nu)$ and $\BD\BD'\bar{z}=\BB+\BA\nu$.
Therefore, the optimal solution and the optimal value of the above problem (\ref{eq:min-unconstrint})
are the same as those of the following unconstrained problem
\begin{align}
 \min_{z \in \R^m} \tfrac{1}{2}z'\BD\BD'z -  \bar{z}' \BD\BD'z.
\end{align}
Note that $\alpha=1$, so the above problem is actually the same as problem \eqref{min:c_t-mu_t}.
The proof is complete.
\end{proof}

\begin{remark}
By Lemma \ref{v-convexincreasing}, we know that the solution $\widehat{V}(t, \cdot)$ to problem (\ref{eq:HJB-positive})
is strictly decreasing and convex on $\big(0, (d+\lambda)e^{-\int_t^T r(s)\dd s}\big]$ for $t\in[ 0,T]$.
Therefore, $v_x(t,x) < 0$ and $v_{xx}(t,x) > 0$ for $t\in[ 0,T]$.
\end{remark}

\par
We now return to the dynamic problem \eqref{eq:HJB-positive}. Let
\begin{align}\label{eq:bar-z}
\bar{z}(t)&:=\argmin_{z  \in \BC_t}  \Vert\sigma(t)'z - \sigma(t)^{-1}B(t)\Vert.
\end{align}
By Lemma \ref{lem:key} with $\alpha = -\frac{v_x(t,x)}{v_{xx}(t,x)}>0$, the infimum in the HJB equation \eqref{eq:HJB-positive} is attained by
\begin{align}\label{eq:pi-v}
\pi = -\frac{v_x(t,x)}{v_{xx}(t,x)}\bar z(t) \in\BC_t.
\end{align}
Moreover,  the HJB equation \eqref{eq:HJB-positive}  is equivalent to
\begin{align}\label{eq:HJB-positive1}
\begin{cases}
v_t(t, x) + \inf\limits_{\pi \in \R^m}\Big\{v_x(t, x)[r(t)x + \pi' \widehat{B}(t)]
+ \frac{1}{2}v_{xx}(t, x)\pi'\sigma(t)\sigma(t)'\pi\Big\} = 0,\qquad \\
\hfill (t,x)\in\sets,\\
v\left(t, (d+\lambda)e^{-\int_t^T r(s)\dd s}\right) = 0, \quad v(t,0)=(d+\lambda)^2,\hfill 0 \leqslant  t \leqslant  T, \\
v(T, x) =  \big(x-(d+\lambda)\big)^2, \hfill 0<x<d+\lambda,
\end{cases}
\end{align}
where
\begin{align*}
   \widehat{B}(t) &:=\sigma(t)\sigma(t)'\bar{z}(t).
\end{align*}
In fact, the solution to the above HJB equation also is the value function associated with the following problem
\begin{eqnarray}\label{eq:Pmu-2}
\begin{array}{ll}
       \min &\quad \BE[X(T) - (d+\lambda)|\BF_t, X(t)=x]^2, \\ [2mm]
       \mathrm{subject\ to}&\quad
       \left\{\begin{array}{l}
         \pi(\cdot) \in  \widehat{\BC}\mbox{ and } X(\cdot) \geqslant  0, \\ [2mm]
       (X(\cdot), \pi(\cdot )) ~ \hbox{\rm satisfies the following equation \eqref{eq:state-positive2}},
       \end{array}\right.
\end{array}
\end{eqnarray}
where
\begin{align}\label{defi:C0}
\widehat{\BC}=\big\{\pi(\cdot): \pi(\cdot)\in L_{\BF}^2(0, T; \R^m)\big\},
\end{align}
and
\begin{equation}\label{eq:state-positive2}
\begin{cases}
\dd X(t) = [r(t)X(t) + \pi(t)'\widehat{B}(t)]\dd t + \pi(t)'\sigma(t)\dd W(t),  \\
 X(0) =x_0.
\end{cases}
\end{equation}
Removing the Lagrange multiplier, the above problem has the same
optimal control as the following mean-variance problem without
constraints on the portfolio:
\begin{eqnarray}\label{eq:utility-positive12}
\begin{array}{ll}
       \min & \quad\Var (X(T))= \BE[X(T) - \tilde{d}]^2, \\ [2mm]
         \mathrm{subject\ to} &\quad
       \left\{\begin{array}{l}
       \BE[X(T)] = \tilde{d}, \\ [2mm]
       \pi(\cdot) \in \widehat{\BC}\mbox{ and } X(\cdot) \geqslant  0, \\ [2mm]
       (X(\cdot), \pi(\cdot )) ~ \hbox{\rm satisfies equation \eqref{eq:state-positive2}},
       \end{array}\right.
\end{array}
\end{eqnarray}
for some $\tilde{d}$. Because the optimal solution to the above problem
\eqref{eq:utility-positive12} is also optimal to problem
\eqref{eq:utility-positive}, the mean of the optimal terminal wealth
should be the same, that is to say $\tilde{d}=d$. Therefore, we conclude that
problem  \eqref{eq:utility-positive} and problem
\begin{eqnarray}\label{eq:utility-positive2}
\begin{array}{ll}
       \min & \quad\Var (X(T))= \BE[X(T) - d]^2, \\ [2mm]
         \mathrm{subject\ to} &\quad
       \left\{\begin{array}{l}
       \BE[X(T)] = d, \\ [2mm]
       \pi(\cdot) \in  \widehat{\BC}  \mbox{ and } X(\cdot) \geqslant  0, \\ [2mm]
       (X(\cdot), \pi(\cdot )) ~ \hbox{\rm satisfies equation \eqref{eq:state-positive2}},
       \end{array}\right.
\end{array}
\end{eqnarray}
have the same optimal solution.
\par
The above mean-variance with bankruptcy prohibition problem was fully solved in \cite{BJPZ}.
Consequently, so is our problem \eqref{eq:utility-positive}. Moreover, these two problems have the same efficient frontier.
\par
It is interesting to note that 
the market price of risk $\widehat\theta(\cdot) = \sigma(\cdot)^{-1}\widehat{B}(\cdot)$
does {\it not} depend on the wealth process $X(\cdot)$.
This important feature allows us to give a linear feedback policy in $X(\cdot)$ at or before the terminal time.

\section{Optimal Portfolio}\label{sec:portfolio}
\noindent%
The result of the martingale pricing theory states that the set of random terminal payoffs
that can be generated by feasible trading strategies corresponds to
the set of nonnegative ${\cal F}_T$-measurable random payoffs $X(T)$
which satisfy the budget constraint $\BE[\phi(T)X(T)] \leqslant x_0$.
Therefore, the dynamic problem  \eqref{eq:utility-positive2}, of choosing an optimal trading strategy $\pi(\cdot)$, is equivalent to the static problem of choosing an optimal payoff $X(T)$:
\begin{align}\label{eq:MV-static}
\begin{array}{ll}
       \min & \quad\Var (X(T))= \BE[X(T) - d]^2, \\ [2mm]
       \mathrm{subject\ to} & \quad
       \left\{\begin{array}{l}
       \BE[X(T)] = d, \\ [2mm]
       \BE[\phi(T)X(T)] = x_0, \\ [2mm]
       X(T) \geqslant 0,
       \end{array}\right.
\end{array}
\end{align}
where  $\phi(\cdot)$ is the \textit{state price density}, or \textit{stochastic discount factor}, defined by
\begin{align}\label{eq:phi}
\left\{\begin{array}{rcl}
\dd\phi(t) & = & \phi(t)\{-r(t)\dd t - \widehat\theta(t)'\dd W(t)\}, \\ [2mm]
 \phi(0) & = & 1,
\end{array}\right.
\end{align}
and
\begin{align*}
\widehat\theta(t)= \sigma(t)^{-1}\widehat{B}(t)=\sigma(t)'\bar{z}(t).
\end{align*}
The above static optimization problem (\ref{eq:MV-static}) was solved in \cite{BJPZ}. The optimal random terminal payoff is
\begin{align}\label{eq:X-T-sol}
X^*(T) = (\mu-\gamma\phi(T))^+,
\end{align}
where $x^+=\max\{x,0\}$, $\mu > 0$, $\gamma > 0$ and $(\mu,\gamma)\in\R^2$ solves the system of equations
\begin{align}\label{eq:sys}
\left\{\begin{array}{l}
\BE[(\mu-\gamma\phi(T))^+] = d, \\ [2mm]
\BE[\phi(T)(\mu-\gamma\phi(T))^+] = x_0.
\end{array}\right.
\end{align}
That is,
\begin{align}\label{eq:sys-1}
\left\{\begin{array}{l}
\!\mu N\Big(\frac{\ln\big(\frac{\mu}{\gamma}\big)+\int_0^T[r(s)+\frac{1}{2}\vert\widehat\theta(s)\vert^2]\dd s}{\sqrt{\int_0^T\vert\widehat\theta(s)\vert^2ds}}\Big)
\!-\! \gamma e^{-\int_0^Tr(s)\dd s}N\Big(\frac{\ln\big(\frac{\mu}{\gamma}\big)+\int_0^T[r(s)-\frac{1}{2}\vert\widehat\theta(s)\vert^2]\dd s}{\sqrt{\int_0^T\vert\widehat\theta(s)\vert^2ds}}\Big) = d, \\ [2mm]
\!\mu N\Big(\frac{\ln\big(\frac{\mu}{\gamma}\big)+\int_0^T[r(s)-\frac{1}{2}\vert\widehat\theta(s)\vert^2]\dd s}{\sqrt{\int_0^T\vert\widehat\theta(s)\vert^2ds}}\Big)
\!-\! \gamma e^{-\int_0^T[r(s)-\vert\widehat\theta(s)\vert^2]\dd s}N\Big(\frac{\ln\big(\frac{\mu}{\gamma}\big)+\int_0^T[r(s)-\frac{3}{2}\vert\widehat\theta(s)\vert^2]\dd s}{\sqrt{\int_0^T\vert\widehat\theta(s)\vert^2ds}}\Big) = x_0e^{\int_0^Tr(s)\dd s},
\end{array}\right.
\end{align}
where $N(y) = \frac{1}{\sqrt{2\pi}}\int_{-\infty}^ye^{-\frac{t^2}{2}}\dd t$ is the cumulative distribution function of the standard normal distribution.

The investor's optimal wealth is given by the stochastic process
\begin{align}\label{eq:X-opt}
X^*(t) & =  \BE\bigg[\frac{\phi(T)}{\phi(t)}X^*(T)\bigg\vert\mathcal{F}_t\bigg] =f(t,\phi(t)),
\end{align}
where
\begin{align}
f(t,y)=  \mu N\big(-d_2(t,y)\big)e^{-\int_t^T r(s)\dd s} - \gamma N\big(-d_1(t,y)\big) y e^{-\int_t^T[2r(s)-\vert\widehat\theta(s)\vert^2]\dd s},
\end{align}
and
\begin{align*}
\begin{array}{l}
d_1(t,y) := \frac{\ln\big(\frac{\gamma}{\mu}y\big)+\int_t^T[-r(s)+\frac{3}{2}\vert\widehat\theta(s)\vert^2]\dd s}{\sqrt{\int_t^T\vert\widehat\theta(s)\vert^2ds}}, \quad
d_2(t,y) := d_1(t, y) - \sqrt{\int_t^T\vert\widehat\theta(s)\vert^2\dd s}.
\end{array}
\end{align*}
%
Applying It\^o's lemma to $f(\cdot,\phi(\cdot))$ yields
\begin{align*}
\dd X^*(t)=\dd f(t,\phi(t))=\{\cdots\}\dd t+\gamma\hat\theta(t)N\big(-d_1(t,\phi(t))\big)\phi(t)e^{-\int_t^T[2r(s)-\vert\widehat\theta(s)\vert^2]ds}\dd W(t).
\end{align*}
Comparing this to the wealth evolution equation  \eqref{eq:state-positive2},
we obtain the efficient portfolio
\begin{align}\label{eq:pi-opt-1}
\begin{array}{rcl}
\pi^*(t)
& = & \gamma(\sigma(t)\sigma(t)')^{-1}\widehat{B}(t)N\big(-d_1(t,\phi(t))\big)\phi(t)e^{-\int_t^T[2r(s)-\vert\widehat\theta(s)\vert^2]\dd s} \\ [2mm]
& = & -(\sigma(t)\sigma(t)')^{-1}\widehat{B}(t)\Big[X^*(t) - \mu N\big(-d_2(t,\phi(t))\big)e^{-\int_t^T r(s)\dd s}\Big].
\end{array}
\end{align}
\begin{remark}
The above results for the efficient portfolio and the associated wealth process were first derived in Bielecki et al. \cite{BJPZ} and give a complete solution
to the mean-variance portfolio selection problem with bankruptcy prohibition (\ref{eq:utility-positive2}).
\end{remark}
Based on the above analysis, we have the following main theorem.
\begin{thm}\label{thm:portfolio}
Assume that $\int_0^T\vert\widehat\theta(s)\vert^2\dd s > 0$. Then there exists a unique efficient portfolio for (\ref{eq:utility-positive})
corresponding to any given $d \geqslant x_0e^{\int_0^Tr(s)\dd s}$. Moreover, the efficient portfolio is given by (\ref{eq:pi-opt-1}) and
the associated wealth process is expressed by (\ref{eq:X-opt}).
\end{thm}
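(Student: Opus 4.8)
The plan is to assemble the reductions carried out in Sections 3--4 and then to isolate the one genuinely new analytic point, namely the solvability of the system \eqref{eq:sys} for $(\mu,\gamma)$. First I would recall that Section 3 has already reduced the constrained problem \eqref{eq:utility-positive} to the cone-free problem \eqref{eq:utility-positive2}: through the HJB equation \eqref{eq:HJB-positive}, Lemma \ref{lem:key} shows that the pointwise infimum over $\BC_t$ is attained in the single direction $\bar z(t)$ of \eqref{eq:bar-z}, so that replacing the excess return $B(\cdot)$ by $\widehat B(\cdot)=\sigma(\cdot)\sigma(\cdot)'\bar z(\cdot)$ and deleting the cone constraint leaves the optimal pair unchanged. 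Hence it suffices to solve \eqref{eq:utility-positive2}, which is precisely the bankruptcy-prohibited, portfolio-unconstrained mean-variance problem of \cite{BJPZ}.

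Following the martingale method, I would then pass to the static problem \eqref{eq:MV-static} with the state-price density $\phi(\cdot)$ of \eqref{eq:phi}, for which \cite{BJPZ} furnishes the candidate terminal payoff $X^*(T)=(\mu-\gamma\phi(T))^+$; the mean and budget constraints reduce to the system \eqref{eq:sys}, whose closed form is \eqref{eq:sys-1} via the standard normal CDF $N(\cdot)$.

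The hard part is producing a unique positive pair $(\mu,\gamma)$ solving \eqref{eq:sys}. Here the hypothesis $\int_0^T|\widehat\theta(s)|^2\dd s>0$ is decisive: it makes $\phi(T)$ genuinely lognormal with strictly positive variance, so that the payoff is nondegenerate (when this integral vanishes, $\phi(T)$ is deterministic, $\widehat d=x_0e^{\int_0^T r(s)\dd s}$, and the Example shows the problem collapses). I would show that the budget equation $\BE[\phi(T)X^*(T)]=x_0$ carves out a one-dimensional curve in the quadrant $\{\mu>0,\gamma>0\}$, along which $d=\BE[X^*(T)]$ is continuous and strictly increasing, running from $x_0e^{\int_0^T r(s)\dd s}$ (the limit $\gamma\to0$, where $X^*(T)$ becomes the riskless payoff) up to $\widehat d$. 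Strict monotonicity---obtained by differentiating the lognormal expressions in \eqref{eq:sys-1}---then yields exactly one $(\mu,\gamma)$ for each admissible $d\in\setdp$.

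Finally, with $(\mu,\gamma)$ fixed, I would recover the wealth process by replication, $X^*(t)=\BE[(\phi(T)/\phi(t))X^*(T)\mid\mathcal{F}_t]$; since $\phi$ is a geometric (hence Markov) process, this conditional expectation evaluates to the explicit function $f(t,\phi(t))$ of \eqref{eq:X-opt}. Applying It\^o's lemma to $f(t,\phi(t))$ and matching the $\dd W$-coefficient against the dynamics \eqref{eq:state-positive2} produces \eqref{eq:pi-opt-1}, with $\pi^*$ uniquely determined because $\sigma\sigma'$ is invertible by the non-degeneracy \eqref{sigmabounded}; admissibility ($X^*\geqslant0$ and $\pi^*\in L_{\BF}^2$) is automatic from $X^*(T)\geqslant0$ and the boundedness of the coefficients. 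Transferring back through the equivalence of Section 3, this $\pi^*$ is the unique efficient portfolio for \eqref{eq:utility-positive}, which is the claim.
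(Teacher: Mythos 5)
Your proposal is correct and follows essentially the same route as the paper, which states Theorem \ref{thm:portfolio} as a direct consequence of the Section 3 reduction to the unconstrained problem \eqref{eq:utility-positive2} and the martingale-based solution \eqref{eq:X-T-sol}--\eqref{eq:pi-opt-1} imported from Bielecki et al.\ \cite{BJPZ}. The only difference is that you sketch the existence and uniqueness of $(\mu,\gamma)$ solving \eqref{eq:sys} via a monotonicity argument, whereas the paper delegates that step entirely to \cite{BJPZ}.
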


\section{Special Models}
\noindent
The main result of the constrained mean-variance portfolio selection model
with bankruptcy prohibition derived in Section \ref{sec:portfolio} is quite surprising but important.
The mean-variance portfolio selection model, like many other
stochastic optimization models, is
based on averaging over all the possible random scenarios.
We now discuss the model in terms of how
its different constraints could guide real investment in practice.

\subsection{Bankruptcy Prohibition with Unconstrained Portfolio}\label{sect:5.1}
\noindent
The mean-variance unconstrained portfolio problem with bankruptcy prohibition is
an interesting but practically relevant model.
In this case, $k = m$ and $\pi(\cdot) \in L_{\mathcal{F}}^2(0,T;\mathbb{R}^m)$.
It follows from (\ref{eq:bar-z}) that
\begin{align*}
\bar{z}(t) = \argmin_{z  \in \mathbb{R}^m}  \Vert\sigma(t)'z - \sigma(t)^{-1}B(t)'\Vert = (\sigma(t)\sigma(t)')^{-1}B(t)'.
\end{align*}
Therefore,
\begin{align*}
\widehat{B}(t) = \sigma(t)\sigma(t)'\bar{z}(t) = B(t).
\end{align*}

\begin{prop}\label{prop:bankruptcy}
Assume that $\int_0^T\vert\widehat\theta(s)\vert^2\dd s > 0$. Then there exists a unique efficient portfolio for this mean-variance model
corresponding to any given $d \geqslant x_0e^{\int_0^Tr(s)\dd s}$. Moreover, the efficient portfolio is given by (\ref{eq:pi-opt-1}) and
the associated wealth process is expressed by (\ref{eq:X-opt}),
where $\widehat{B}(t) = B(t)$ and $\widehat\theta(t) = \sigma(t)^{-1}B(t)$.
\end{prop}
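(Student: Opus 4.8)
The plan is to read Proposition \ref{prop:bankruptcy} as an immediate specialization of Theorem \ref{thm:portfolio} to the unconstrained setting, so that the only genuine computation is to check that the transformed excess return $\widehat{B}(\cdot)$ reduces to the original $B(\cdot)$. First I would record that ``unconstrained portfolio'' means the admissible cone at each instant is the whole space, i.e.\ $\BC_t=\R^m$ for every $t\in[0,T]$; consequently the constrained least-squares problem defining $\bar z(t)$ in \eqref{eq:bar-z} degenerates into a minimization over all of $\R^m$.

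Next I would exploit the non-degeneracy condition \eqref{sigmabounded}, which guarantees that $\sigma(t)$ is invertible for every $t$. Since $z\mapsto \sigma(t)'z$ is then a linear bijection of $\R^m$, the objective $\Vert\sigma(t)'z-\sigma(t)^{-1}B(t)\Vert$ can be driven to zero, and its unique minimizer solves the linear system $\sigma(t)'z=\sigma(t)^{-1}B(t)$, namely $\bar z(t)=(\sigma(t)\sigma(t)')^{-1}B(t)$. Substituting into $\widehat{B}(t)=\sigma(t)\sigma(t)'\bar z(t)$ yields $\widehat{B}(t)=B(t)$ at once, whence $\widehat\theta(t)=\sigma(t)^{-1}\widehat{B}(t)=\sigma(t)^{-1}B(t)$. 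This is exactly the pair of identities asserted in the statement.

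Finally I would invoke Theorem \ref{thm:portfolio}. Section 3 has already shown that the constrained problem \eqref{eq:utility-positive} shares the same optimal solution and the same efficient frontier as the auxiliary problem \eqref{eq:utility-positive2} with transformed data $\widehat{B}$; in the present case the transformed feasible class $\widehat{\BC}$ already coincides with the original unconstrained class, so the two problems agree verbatim and no reduction is lost. The hypothesis $\int_0^T\vert\widehat\theta(s)\vert^2\dd s>0$ carries over unchanged, so Theorem \ref{thm:portfolio} supplies a unique efficient portfolio for every $d\geqslant x_0e^{\int_0^T r(s)\dd s}$, given by \eqref{eq:pi-opt-1} with wealth process \eqref{eq:X-opt}; inserting $\widehat{B}(t)=B(t)$ and $\widehat\theta(t)=\sigma(t)^{-1}B(t)$ completes the argument.

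I do not expect any substantive obstacle here: the heavy lifting was done in verifying the HJB reduction and in Theorem \ref{thm:portfolio}, and the present claim is a corollary. The only points deserving a moment of care are confirming that the well-posedness inputs to Theorem \ref{thm:portfolio}, in particular the invertibility used for \eqref{eq:bar-z} and the positivity of $\int_0^T\vert\widehat\theta\vert^2\dd s$, are genuinely inherited, and making explicit that the machinery is applied with the transformed coefficient $\widehat{B}$ in place of the (now vacuous) cone constraint.
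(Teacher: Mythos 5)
Your proposal is correct and follows essentially the same route as the paper: Section 5.1 makes exactly the computation $\bar z(t)=\argmin_{z\in\R^m}\Vert\sigma(t)'z-\sigma(t)^{-1}B(t)\Vert=(\sigma(t)\sigma(t)')^{-1}B(t)$, hence $\widehat B(t)=B(t)$ and $\widehat\theta(t)=\sigma(t)^{-1}B(t)$, and then defers the rest to the Bielecki--Jin--Pliska--Zhou solution of the unconstrained bankruptcy-prohibition problem, which is precisely what your appeal to Theorem \ref{thm:portfolio} amounts to in this degenerate case. No gap.
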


The proof of Proposition \ref{prop:bankruptcy} can be found in Bielecki et al. \cite{BJPZ}.

\subsection{Bankruptcy Prohibition with No-shorting Constraint}\label{sect:5.2}
\noindent
The mean-variance portfolio problem with mixed bankruptcy and no-shorting constraints is
another interesting and challenging model.
In this case, $k = m$ and $\pi(\cdot) \in L_{\mathcal{F}}^2(0,T;\mathbb{R}_+^m)$.
It follows from (\ref{eq:bar-z}) that
\begin{align*}
\bar{z}(t) = \argmin_{z \in \mathbb{R}_+^m} \Vert\sigma(t)'z - \sigma(t)^{-1}B(t)'\Vert = (\sigma(t)\sigma(t)')^{-1}(B(t)+\lambda(t))',
\end{align*}
where
\begin{align}\label{eq:lambda}
\lambda(t) := \displaystyle\argmin_{y \in \mathbb{R}_+^m} \Vert\sigma(t)^{-1}y + \sigma(t)^{-1}B(t)'\Vert.
\end{align}
Therefore,
\begin{align*}
\widehat{B}(t) = \sigma(t)\sigma(t)'\bar{z}(t) = B(t)+\lambda(t).
\end{align*}

\begin{prop}\label{prop:bankruptcy-no-shorting}
Assume that $\int_0^T\vert\widehat\theta(s)\vert^2\dd s > 0$. Then there exists a unique efficient portfolio for this mean-variance model
corresponding to any given $d \geqslant x_0e^{\int_0^Tr(s)\dd s}$. Moreover, the efficient portfolio is given by (\ref{eq:pi-opt-1}) and
the associated wealth process is expressed by (\ref{eq:X-opt}),
where $\widehat{B}(t) = B(t)+\lambda(t)$ and $\widehat\theta(t) = \sigma(t)^{-1}(B(t)+\lambda(t))$.
\end{prop}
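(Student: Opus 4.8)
The plan is to derive Proposition \ref{prop:bankruptcy-no-shorting} as a direct specialization of the general result Theorem \ref{thm:portfolio}. The key observation is that the no-shorting requirement $\pi(\cdot)\in L_{\mathcal F}^2(0,T;\R_+^m)$ is exactly the convex cone constraint $\BC$ of \eqref{defi:C} for the choice $C(t)\equiv I_m$ (so that $k=m$), since then the pointwise cone reduces to $\BC_t=\{z\in\R^m:I_m'z\in\R_+^m\}=\R_+^m$. Once this identification is in place, the entire reduction of Section 3 applies: the constrained problem is equivalent to the unconstrained mean-variance problem \eqref{eq:utility-positive2} driven by the modified excess-return vector $\widehat B(\cdot)$, and Theorem \ref{thm:portfolio} then supplies existence, uniqueness, and the closed forms \eqref{eq:pi-opt-1} and \eqref{eq:X-opt}. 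Hence the only genuinely new task is to evaluate $\widehat B(t)$ and $\widehat\theta(t)$ for this particular cone.

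First I would compute the projection $\bar z(t)$ of \eqref{eq:bar-z} for $\BC_t=\R_+^m$. Expanding the objective gives $\Vert\sigma(t)'z-\sigma(t)^{-1}B(t)\Vert^2 = z'\sigma(t)\sigma(t)'z - 2z'B(t) + \text{const}$, so $\bar z(t)$ solves the strictly convex program $\min_{z\geqslant 0}\tfrac12 z'\sigma(t)\sigma(t)'z - z'B(t)$; this is precisely problem \eqref{min:c_t} of Lemma \ref{lem:key} with the identity matrix in place of $\BA$, with $\sigma(t)\sigma(t)'$ in place of $\BD\BD'$, $B(t)$ in place of $\BB$, and $\alpha=1$. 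Strict convexity, guaranteed by the non-degeneracy condition \eqref{sigmabounded}, yields a unique minimizer. Writing the KKT conditions with a multiplier $\lambda(t)\in\R_+^m$ for $z\geqslant 0$, stationarity reads $\sigma(t)\sigma(t)'\bar z(t) - B(t) - \lambda(t)=0$, whence $\bar z(t)=(\sigma(t)\sigma(t)')^{-1}(B(t)+\lambda(t))$, and therefore $\widehat B(t)=\sigma(t)\sigma(t)'\bar z(t)=B(t)+\lambda(t)$ and $\widehat\theta(t)=\sigma(t)^{-1}\widehat B(t)=\sigma(t)^{-1}(B(t)+\lambda(t))$, as asserted.

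Next I would identify the multiplier $\lambda(t)$ with the object defined in \eqref{eq:lambda}. Substituting $\bar z(t)=(\sigma(t)\sigma(t)')^{-1}(B(t)+\lambda(t))$ back into the Lagrangian and using complementary slackness collapses the dual objective to $-\tfrac12(B(t)+\lambda(t))'(\sigma(t)\sigma(t)')^{-1}(B(t)+\lambda(t))=-\tfrac12\Vert\sigma(t)^{-1}(B(t)+\lambda(t))\Vert^2$. By strong duality for this feasible convex quadratic program the optimal multiplier maximizes this expression over $\lambda\geqslant 0$, equivalently $\lambda(t)=\argmin_{y\in\R_+^m}\Vert\sigma(t)^{-1}y+\sigma(t)^{-1}B(t)\Vert$, which is exactly \eqref{eq:lambda}. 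With $\widehat B$ and $\widehat\theta$ so determined, invoking Theorem \ref{thm:portfolio} (equivalently, the Bielecki--Jin--Pliska--Zhou solution of the unconstrained problem \eqref{eq:utility-positive2}) finishes the argument.

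The main obstacle I anticipate is not the pointwise algebra but the regularity of the selection $t\mapsto\bar z(t)$, equivalently $t\mapsto\lambda(t)$: before feeding $\widehat B(\cdot)$ into the martingale and HJB machinery of Sections 3--4 one must verify that $\bar z(\cdot)$ is measurable and that $\widehat B(\cdot)$ and $\widehat\theta(\cdot)$ are uniformly bounded on $[0,T]$. Measurability follows from a measurable-selection argument applied to the parametrized strictly convex program, while uniform boundedness follows from the uniform bounds on $B(\cdot)$ and $\sigma(\cdot)$ together with the uniform ellipticity \eqref{sigmabounded}. One should also note that the hypothesis $\int_0^T\vert\widehat\theta(s)\vert^2\dd s>0$ is assumed rather than derived; it is exactly the condition under which Theorem \ref{thm:portfolio} produces a non-degenerate efficient frontier. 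These points, though routine, are where the argument must be made airtight before the closed-form conclusions of Theorem \ref{thm:portfolio} can be transported to the no-shorting setting.
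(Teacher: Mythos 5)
Your proposal is correct and follows essentially the same route as the paper: the paper also obtains Proposition \ref{prop:bankruptcy-no-shorting} by specializing the cone $\BC_t$ to $\mathbb{R}_+^m$, computing $\bar z(t)=(\sigma(t)\sigma(t)')^{-1}(B(t)+\lambda(t))$ with $\lambda(t)$ as in \eqref{eq:lambda}, and then invoking the general reduction of Section 3 together with Theorem \ref{thm:portfolio}. Your KKT/duality derivation of the identification $\widehat B(t)=B(t)+\lambda(t)$ and the remarks on measurability and boundedness simply make explicit steps the paper leaves implicit.
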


\subsection{No-shorting Constraint without Bankruptcy Prohibition}\label{sect:5.3}
\noindent
The mean-variance portfolio problem with no-shorting constraints is also
an important model in financial investment.
In this case, $k = m$ and $\pi(\cdot)\in L_{\mathcal{F}}^2(0,T;\mathbb{R}_+^m)$.
We again have
\begin{align*}
\widehat{B}(t) = B(t)+\lambda(t),
\end{align*}
where $\lambda(t)$ is determined by (\ref{eq:lambda}).

In particular, $d_1(t,\phi(t))=-\infty$ and $d_2(t,\phi(t))=-\infty$, that is,
$N\big(-d_1(t,\phi(t))\big) = N\big(-d_2(t,\phi(t))\big) = 1$.
The investor's optimal wealth is the stochastic process
\begin{align}\label{eq:X-opt-no-shorting}
\begin{array}{rcl}
X^*(t)
& = & \mu e^{-\int_t^T r(s)\dd s} - \gamma\phi(t)e^{-\int_t^T[2r(s)-\vert\widehat\theta(s)\vert^2]\dd s}
\end{array}
\end{align}
and
\begin{align}\label{eq:pi-opt-no-shorting}
\begin{array}{rcl}
\pi^*(t)
& = & -(\sigma(t)\sigma(t)')^{-1}\widehat{B}(t)\Big[X^*(t) - \mu e^{-\int_t^T r(s)\dd s}\Big],
\end{array}
\end{align}
where
\begin{align}\label{eq:pi-opt-no-shorting}
& \mu = \frac{\BE[\phi(T)^2]d-x_0\BE[\phi(T)]}{\Var(\phi(T))} = \frac{d-x_0e^{\int_0^T[r(s)-\vert\widehat\theta(s)\vert^2]\dd s}}{1-e^{-\int_0^T\vert\widehat\theta(s)\vert^2\dd s}}, \\
& \gamma = \frac{\BE[\phi(T)]d-x_0}{\Var(\phi(T))} = \frac{\big(d-x_0e^{\int_0^Tr(s)\dd s}\big)e^{\int_0^T[r(s)-\vert\widehat\theta(s)\vert^2]\dd s}}{1-e^{-\int_0^T\vert\widehat\theta(s)\vert^2\dd s}}.
\end{align}
where $\widehat\theta(t) = \sigma(t)^{-1}(B(t)+\lambda(t))$.

\begin{prop}\label{prop:no-shorting}
Assume that $\int_0^T\vert\widehat\theta(s)\vert^2\dd s > 0$. Then there exists a unique efficient portfolio for this mean-variance model
corresponding to any given $d \geqslant x_0e^{\int_0^Tr(s)\dd s}$. Moreover, the efficient portfolio is given by (\ref{eq:X-opt-no-shorting}) and
the associated wealth process is expressed by (\ref{eq:pi-opt-no-shorting}).

\end{prop}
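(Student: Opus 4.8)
The plan is to treat the no-shorting constraint $\pi(\cdot)\in\mathbb{R}_+^m$ as the convex cone constraint \eqref{defi:C} with $k=m$ and $C(t)\equiv I_m$, so that $\BC_t=\mathbb{R}_+^m$, and then to reduce the problem to an unconstrained one exactly as in Section 3. First I would invoke Lemma \ref{lem:key} (with $\BA=I_m$, $\BD=\sigma(t)$, $\BB=B(t)$) to show that the pointwise minimization in the Hamiltonian is achieved along the fixed direction $\bar z(t)$ of \eqref{eq:bar-z}, which here yields the projected excess return $\widehat B(t)=\sigma(t)\sigma(t)'\bar z(t)=B(t)+\lambda(t)$ with $\lambda(t)$ given by \eqref{eq:lambda}. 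The decisive structural point, carried over from Section 3, is that $\bar z(t)$, and hence the market price of risk $\widehat\theta(t)=\sigma(t)^{-1}\widehat B(t)$, is independent of the wealth level. This allows the no-shorting constrained problem driven by \eqref{eq:state-positive} to be replaced by the unconstrained problem driven by \eqref{eq:state-positive2} with the same modified drift $\widehat B$.

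Second, since bankruptcy is no longer prohibited, the equivalent unconstrained problem is the classical unconstrained mean-variance problem (cf.\ Zhou and Li \cite{ZhouLi}). I would pass to its static martingale formulation, minimizing $\BE[X(T)-d]^2$ over $\mathcal{F}_T$-measurable $X(T)$, now \emph{without} the sign restriction $X(T)\geqslant 0$, subject to $\BE[X(T)]=d$ and the budget constraint $\BE[\phi(T)X(T)]=x_0$, where $\phi(\cdot)$ is the state price density \eqref{eq:phi}. Introducing two Lagrange multipliers for the two linear constraints and minimizing the resulting strictly convex quadratic pointwise, the optimal terminal payoff is the affine functional $X^*(T)=\mu-\gamma\phi(T)$, with no positive part. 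This is precisely why the cutoff arguments degenerate, $d_1(t,\phi(t))=d_2(t,\phi(t))=-\infty$ and $N(-d_1)=N(-d_2)=1$, collapsing the general formula of Section \ref{sec:portfolio} to \eqref{eq:X-opt-no-shorting}.

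Third, I would pin down the constants and read off the feedback law. Substituting $X^*(T)=\mu-\gamma\phi(T)$ into the mean and budget constraints gives the linear system $\mu-\gamma\BE[\phi(T)]=d$ and $\mu\BE[\phi(T)]-\gamma\BE[\phi(T)^2]=x_0$; using $\BE[\phi(T)]=e^{-\int_0^T r(s)\dd s}$ and $\BE[\phi(T)^2]=e^{-\int_0^T[2r(s)-|\widehat\theta(s)|^2]\dd s}$ one solves for $(\mu,\gamma)$ and recovers the closed forms in \eqref{eq:pi-opt-no-shorting}. The optimal wealth $X^*(t)=\BE[\tfrac{\phi(T)}{\phi(t)}X^*(T)\,|\,\mathcal{F}_t]$ is then computed directly from the lognormal dynamics of $\phi(\cdot)$, giving \eqref{eq:X-opt-no-shorting}; applying It\^o's lemma and matching the diffusion coefficient against \eqref{eq:state-positive2} produces the portfolio \eqref{eq:pi-opt-no-shorting}. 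Uniqueness is immediate: the hypothesis $\int_0^T|\widehat\theta(s)|^2\dd s>0$ forces $\Var(\phi(T))=e^{-\int_0^T[2r(s)-|\widehat\theta(s)|^2]\dd s}-e^{-2\int_0^T r(s)\dd s}>0$, so the linear system for $(\mu,\gamma)$, equivalently the strictly convex static problem, has a unique solution, and the feedback map is thereby determined.

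The part I expect to be most delicate is the first step, namely justifying the constrained-to-unconstrained reduction \emph{in the absence of} bankruptcy prohibition. The verification argument of Section 3 relied on the HJB boundary data at $x=0$ that encode the constraint $X(\cdot)\geqslant 0$, so it cannot be quoted verbatim here. I would instead either appeal directly to Li, Zhou and Lim \cite{LZL}, who establish precisely this reduction for the no-shorting problem through a viscosity-solution analysis, or re-run a verification theorem for the HJB equation whose terminal datum is $v(T,x)=(x-(d+\lambda))^2$ on all of $\R$; the latter admits a solution quadratic in $x$, for which the Hamiltonian's pointwise minimizer is again governed by Lemma \ref{lem:key} and is independent of $x$. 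Either route delivers the same wealth-independent $\widehat B=B+\lambda$, after which the remaining steps are routine Gaussian computations.
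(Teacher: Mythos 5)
Your proposal is correct and follows essentially the same route as the paper: specialize the cone reduction of Lemma \ref{lem:key} to $\BC_t=\mathbb{R}_+^m$ to obtain $\widehat B=B+\lambda$, observe that without the bankruptcy prohibition the optimal terminal payoff is affine in $\phi(T)$ (so the $N(\cdot)$ factors collapse to $1$), solve the two-equation linear system for $(\mu,\gamma)$, and read off the feedback portfolio — with the constrained-to-unconstrained reduction ultimately backed by Li, Zhou and Lim \cite{LZL}, which is precisely the reference the paper gives in lieu of a self-contained argument. If anything you are more careful than the paper, since you explicitly flag that the Section~3 verification theorem relies on the boundary data at $x=0$ and therefore cannot be quoted verbatim once the constraint $X(\cdot)\geqslant 0$ is dropped.
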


Another version of the proof of Proposition \ref{prop:no-shorting} can be found in Li, Zhou and Lim \cite{LZL}.


\section{Example}
\noindent
In this section, a numerical example with constant coefficients is presented to
demonstrate the results in the previous section. Let $m = 3$. The
interest rate of the bond and the appreciation rate of the $m$
stocks are $r = 0.03$ and $(b_1, b_2, b_3)' = (0.12, ~0.15, ~0.18)'$,
respectively, and the volatility matrix is
$$
\sigma = \left[\begin{array}{rrr}
0.2500  &      0    &     0 \\
0.1500  & 0.2598    &     0 \\
-0.2500 & ~~~0.2887    & ~~~0.3227
\end{array}\right].
$$
Then we have
$$
\sigma^{-1} = \left[\begin{array}{rrr}
4.0000  &     0   &    0 \\
-2.3094 &  3.8490 &    0 \\
5.1640  & ~~~-3.4427 & ~~~3.0984
\end{array}\right]
$$
and
$B = (b_1 - r, b_2 - r, b_3 - r)'
= (0.09, ~0.12, ~0.15)'$.
Hence,
$$
\theta: = \sigma^{-1}B = (0.3600, ~0.2540, ~0.5164)'.
$$
In addition, we suppose that the initial prices of the stocks are $(S_1(0), S_2(0), S_3(0)) = (1, 1, 1)'$ and the initial wealth is $X(0) = 1$.

\subsection{Bankruptcy Prohibition with Unconstrained Portfolio}\label{sect:6.1}
\noindent
In this subsection, we determine the optimal portfolio and the corresponding wealth process in Subsection \ref{sect:5.1} for the above market data.
According to (\ref{eq:sys-1}), we obtain the numerical results $\mu = 1.5046$ and $\gamma = 0.3154$. Hence,
the wealth process (\ref{eq:X-opt}) can be expressed by
\begin{align}\label{eq:X-opt-1}
X^*(t) =  \mu N\big(-d_2(t,\phi(t))\big)e^{-r(T-t)} - \gamma N\big(-d_1(t,\phi(t))\big)\phi(t)e^{-[2r-\vert\hat\theta\vert^2](T-t)},
\end{align}
where
\begin{align*}
\begin{array}{ll}
d_1(t,\phi(t)) = \frac{\ln\big(\frac{\gamma}{\mu}\phi(t)\big)+[-r+\frac{3}{2}\vert\hat\theta\vert^2](T-t)}{\sqrt{\vert\hat\theta\vert^2(T-t)}}, \quad
& d_2(t,\phi(t)) = d_1(t, \phi(t)) - \sqrt{\vert\hat\theta\vert^2(T-t)}, \\ [2mm]
\phi(t) = e^{-[r+\frac{1}{2}\vert\hat\theta\vert^2](T-t)-\hat\theta(W(T)-W(t))},
& \hat\theta = \theta = \sigma^{-1}B = (0.3600, 0.2540, 0.5164)'.
\end{array}
\end{align*}
The efficient portfolio is given by
\begin{align}\label{eq:pi-opt-11}
\begin{array}{rcl}
\pi^*(t)
& = & -(\sigma\sigma')^{-1}\widehat{B}\Big[X^*(t) - \mu N\big(-d_2(t,\phi(t))\big)e^{-r(T-t)}\Big],
\end{array}
\end{align}
where
$$
(\sigma\sigma')^{-1}\widehat{B} = (\sigma\sigma')^{-1}B = (3.5200, ~-0.8000, ~1.6000)'.
$$

In particular, the policy of investing in the second stock $\pi_2^*(t)$ is negative.

\subsection{Bankruptcy Prohibition with No-shorting Constraint}\label{sect:6.2}
\noindent
From Subsection \ref{sect:6.1}, we see that there exists a shorting case in policy (\ref{eq:pi-opt-11}).
Using (\ref{eq:lambda}), we obtain the following $\lambda$ to re-construct the no-shorting policy
\begin{align}\label{eq:lambda-1}
\lambda := \displaystyle\argmin_{y \in \mathbb{R}_+^m} \Vert\sigma^{-1}y + \sigma^{-1}B'\Vert = (0, ~0.03, ~0)'.
\end{align}
Hence,
\begin{align}\label{eq:theta-B}
\left\{\begin{array}{l}
\hat\theta: = \sigma^{-1}\widehat{B} = \sigma^{-1}(B+\lambda) = (0.3600, ~0.3695, ~0.4131)', \\ [2mm]
(\sigma\sigma')^{-1}\widehat{B} = (\sigma\sigma')^{-1}(B+\lambda) = (2.72, ~0, ~1.28)'.
\end{array}\right.
\end{align}
According to (\ref{eq:sys-1}), we obtain the numerical results $\mu = 1.5253$ and $\gamma = 0.3368$. Hence,
the wealth process (\ref{eq:X-opt}) can be expressed by
\begin{align}\label{eq:X-opt-12}
X^*(t) =  \mu N\big(-d_2(t,\phi(t))\big)e^{-r(T-t)} - \gamma N\big(-d_1(t,\phi(t))\big)\phi(t)e^{-[2r-\vert\hat\theta\vert^2](T-t)},
\end{align}
where
\begin{align*}
\begin{array}{l}
d_1(t,\phi(t)) = \frac{\ln\big(\frac{\gamma}{\mu}\phi(t)\big)+[-r+\frac{3}{2}\vert\hat\theta\vert^2](T-t)}{\sqrt{\vert\hat\theta\vert^2(T-t)}}, \quad
d_2(t,\phi(t)) = d_1(t, \phi(t)) - \sqrt{\vert\hat\theta\vert^2(T-t)}, \\ [2mm]
\phi(t) = e^{-[r+\frac{1}{2}\vert\hat\theta\vert^2](T-t)-\hat\theta(W(T)-W(t))}.
\end{array}
\end{align*}
The \begin{align}\label{eq:pi-opt-12}
\begin{array}{rcl}
\pi^*(t)
& = & -(\sigma\sigma')^{-1}\widehat{B}\Big[X^*(t) - \mu N\big(-d_2(t,\phi(t))\big)e^{-r(T-t)}\Big].
\end{array}
\end{align}

Note that the policy in (\ref{eq:pi-opt-12}) is non-negative, that is, this is a no-shorting policy.

\subsection{No-shorting Constraint without Bankruptcy Prohibition}\label{sect:6.3}
\noindent
In this subsection, we present the optimal no-shorting policy without the bankruptcy prohibition of Subsection \ref{sect:5.3} and
its corresponding wealth process.
According to (\ref{eq:pi-opt-no-shorting}), we find the numerical results $\mu = 1.5095$ and $\gamma = 0.3190$. Hence,
the wealth process (\ref{eq:X-opt}) can be expressed by
\begin{align}\label{eq:X-opt-no-shorting-1}
\begin{array}{rcl}
X^*(t)
& = & \mu e^{-r(T-t)} - \gamma\phi(t)e^{-[2r-\vert\hat\theta\vert^2](T-t)}
\end{array}
\end{align}
and
\begin{align}\label{eq:pi-opt-no-shorting-1}
\begin{array}{rcl}
\pi^*(t)
& = & -(\sigma\sigma')^{-1}\widehat{B}\Big[X^*(t) - \mu e^{-r(T-t)}\Big],
\end{array}
\end{align}
where $\hat\theta$ and $(\sigma\sigma')^{-1}\widehat{B}$ are given by (\ref{eq:theta-B}), and
$$
\phi(t) = e^{-[r+\frac{1}{2}\vert\hat\theta\vert^2](T-t)-\hat\theta(W(T)-W(t))}.
$$

Note that the policy in (\ref{eq:pi-opt-no-shorting-1}) is non-negative, i.e., this is a no-shorting policy.
However, its corresponding wealth (\ref{eq:X-opt-no-shorting-1}) is possibly negative.
We shall further discuss this point by simulation results in Subsection \ref{sect:6.4}.

\subsection{Simulation}\label{sect:6.4}
\noindent
In this subsection, we further analyze using simulation how the properties
of the optimal portfolio strategies (\ref{eq:pi-opt-11}), (\ref{eq:pi-opt-12}) and (\ref{eq:pi-opt-no-shorting-1}) change according to the given target
wealth $d = 1.2X(0)$, and compare their wealth processes (\ref{eq:X-opt-1}), (\ref{eq:X-opt-12}) and (\ref{eq:X-opt-no-shorting-1}).

$$
\begin{array}{ccc}
\begin{array}{c}
\includegraphics[height=1.7in]{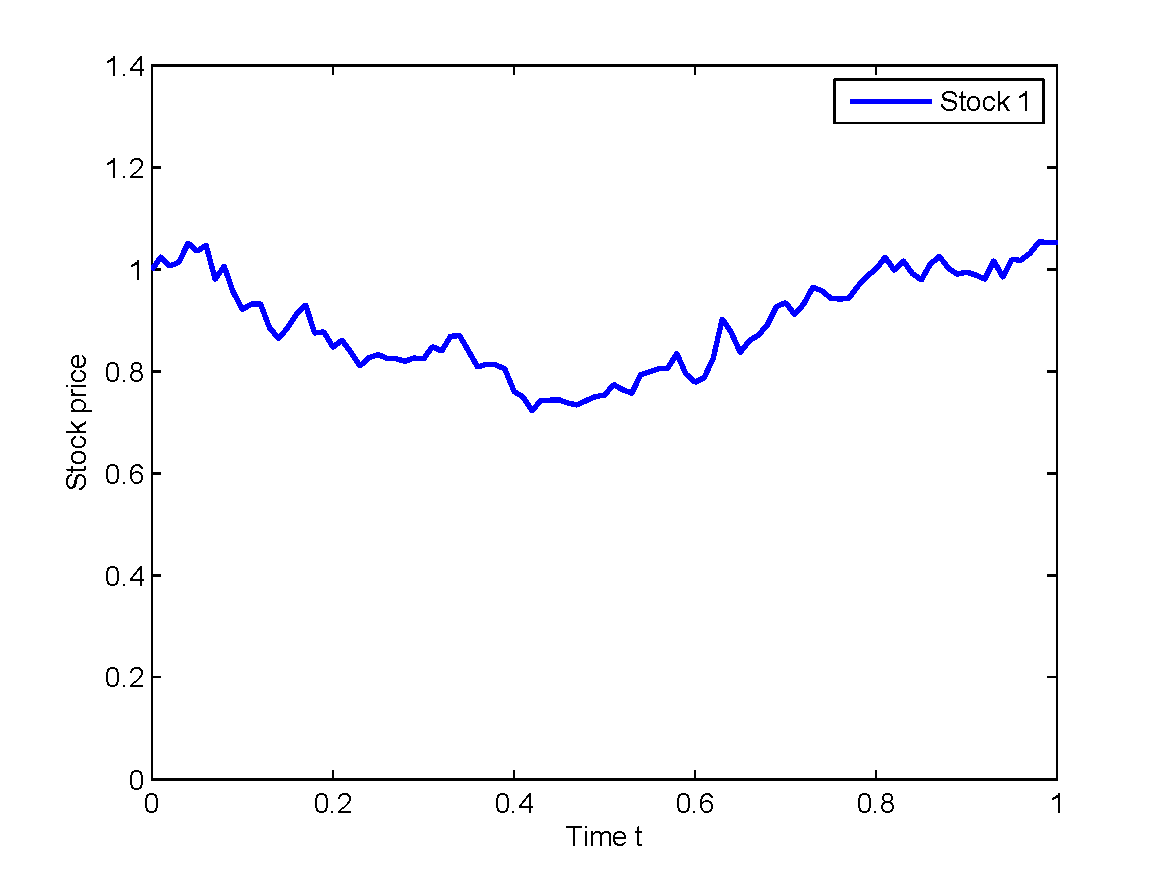}
\includegraphics[height=1.7in]{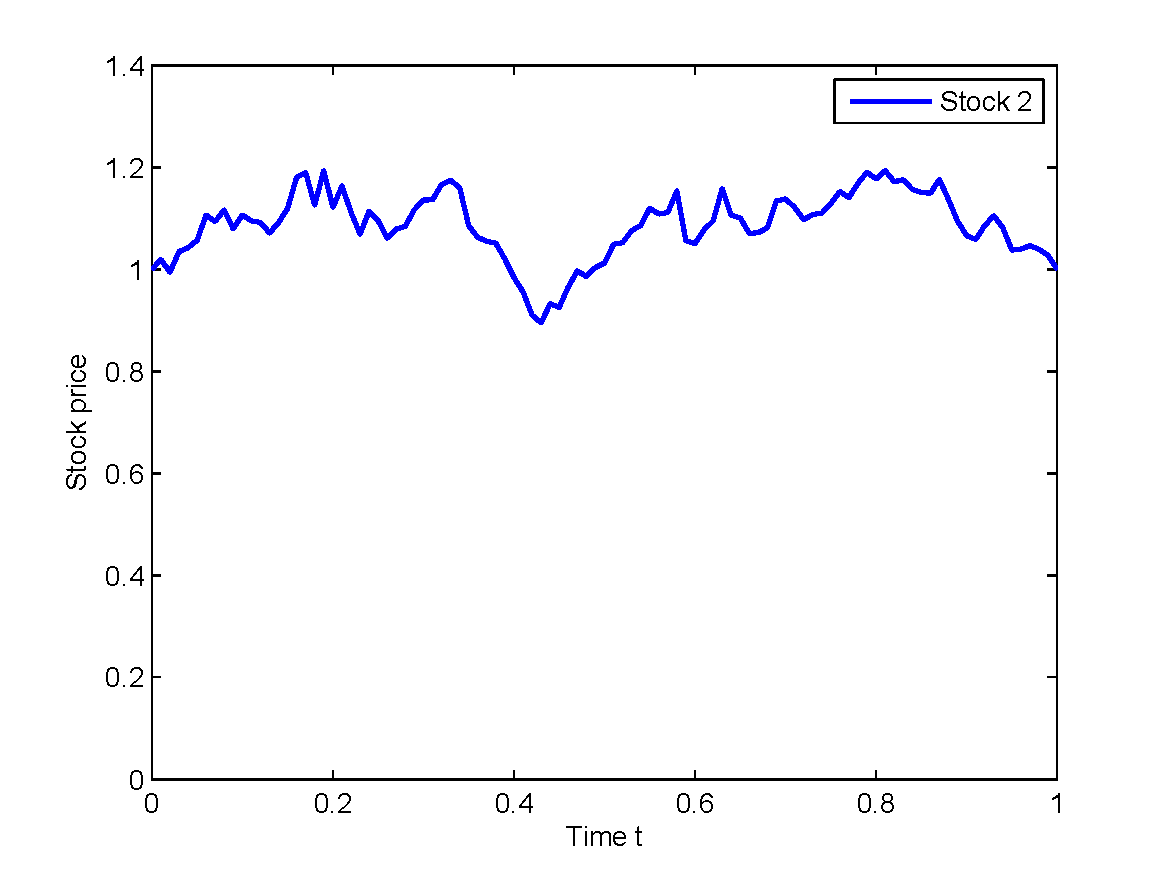}
\includegraphics[height=1.7in]{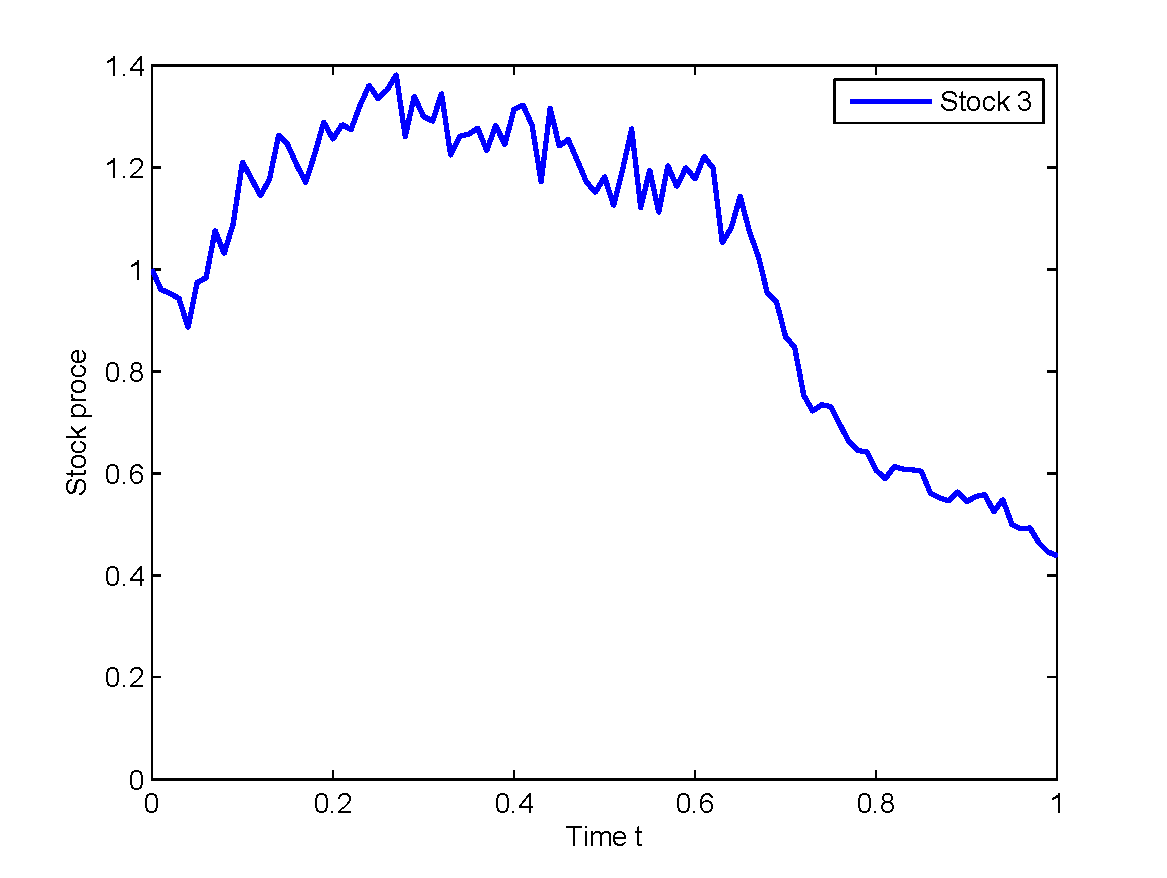} \\
\mbox{The value of boundary}
\end{array}
\end{array}
$$

$$
\begin{array}{ccc}
\begin{array}{c}
\includegraphics[height=1.7in]{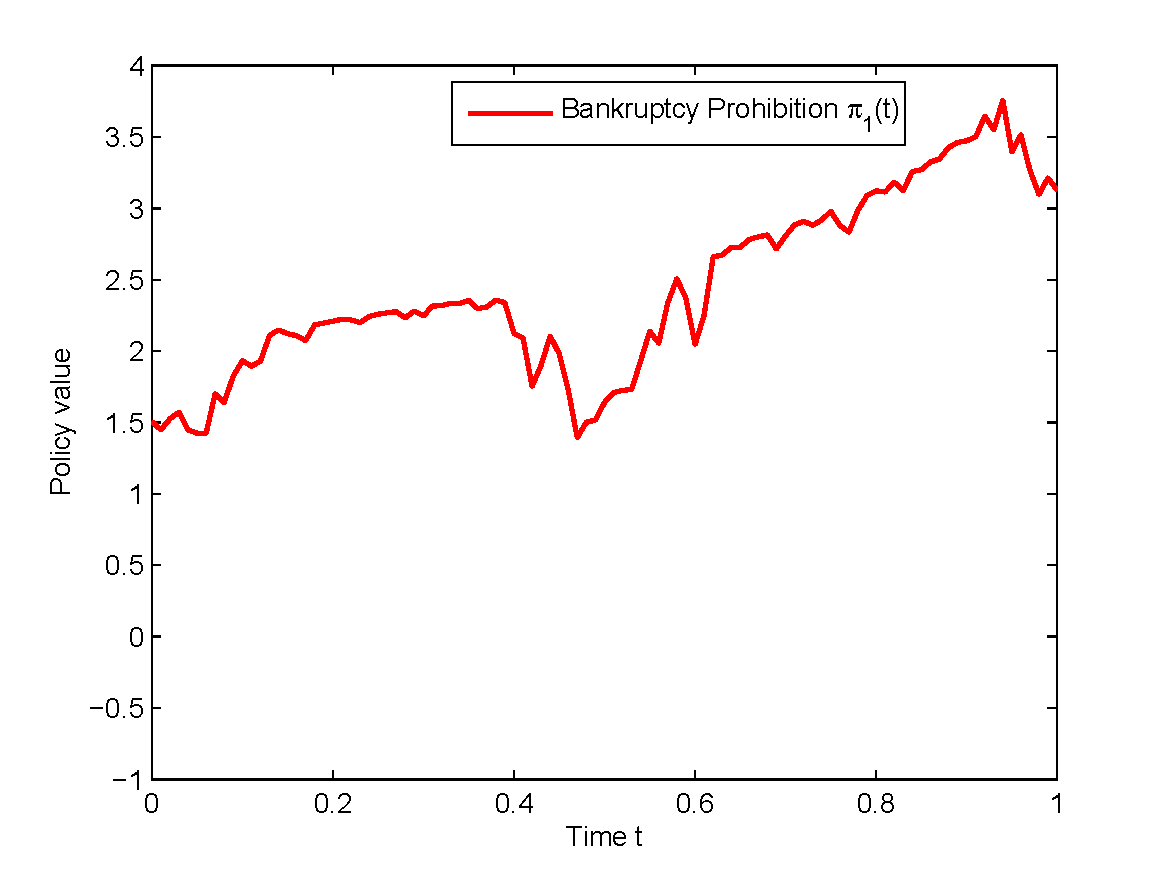}
\includegraphics[height=1.7in]{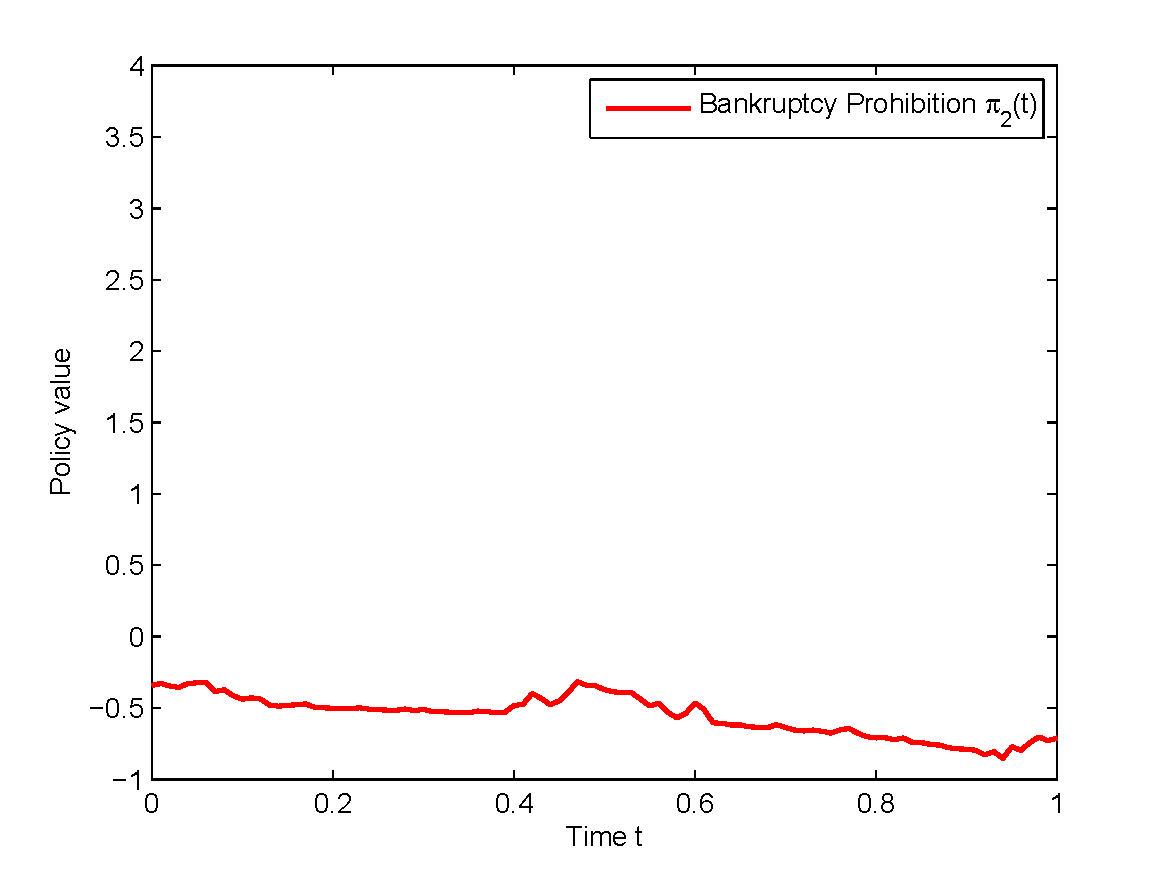}
\includegraphics[height=1.7in]{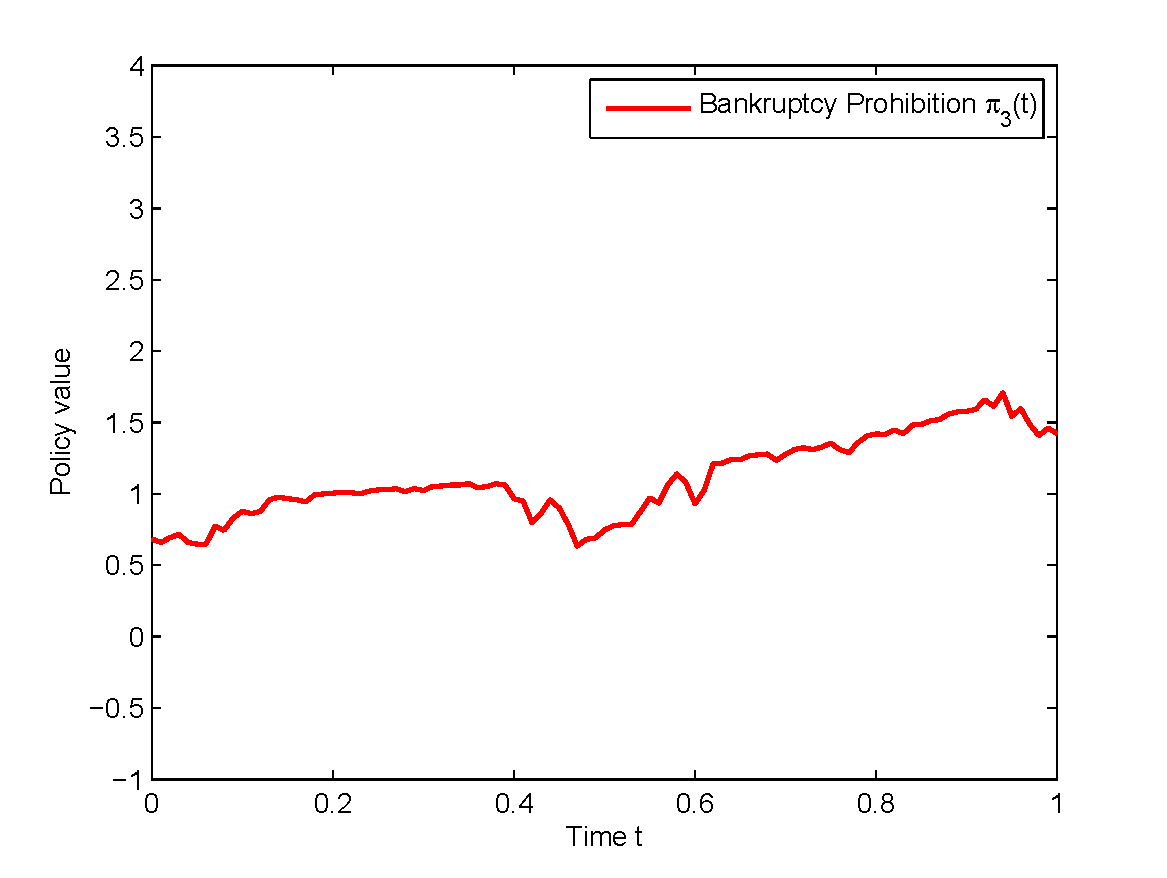} \\
\mbox{Policy of Bankruptcy Prohibition with Unconstrained Portfolio}
\end{array}
\end{array}
$$

$$
\begin{array}{ccc}
\begin{array}{c}
\includegraphics[height=1.7in]{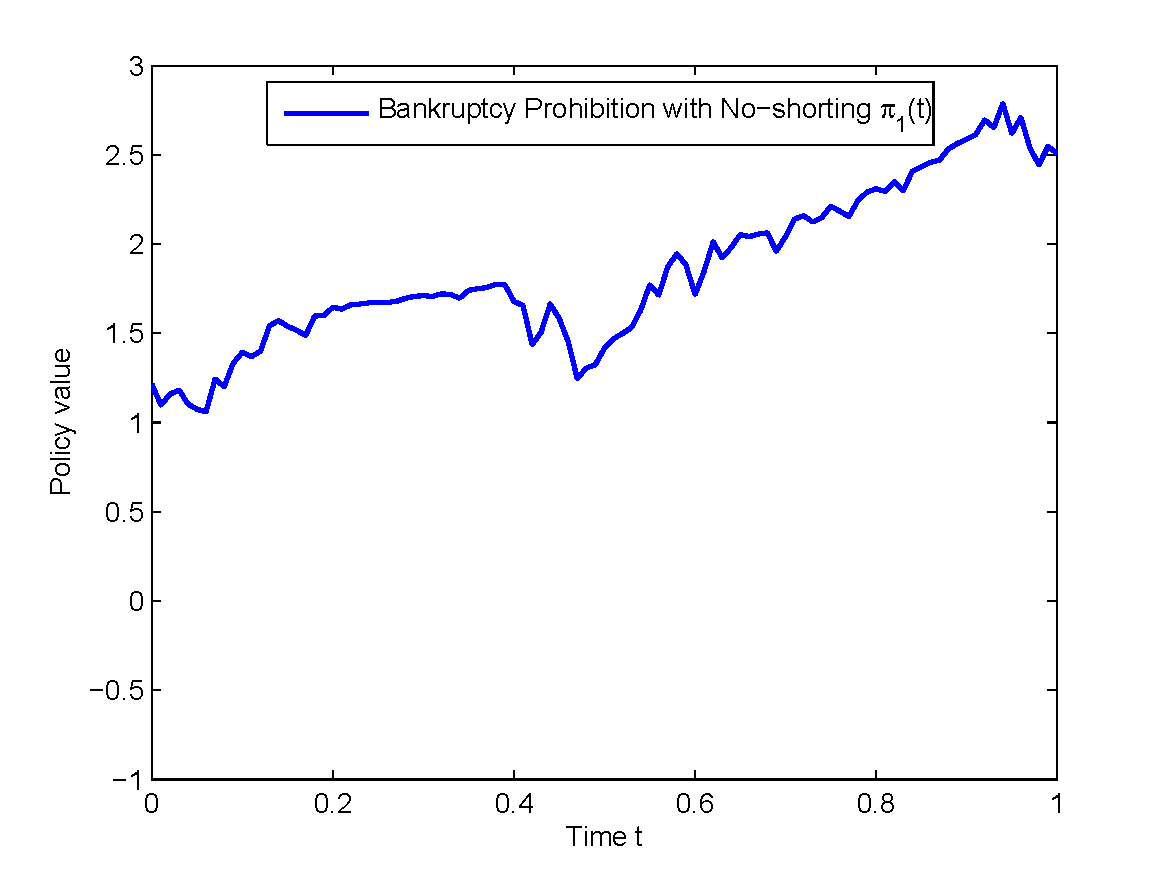}
\includegraphics[height=1.7in]{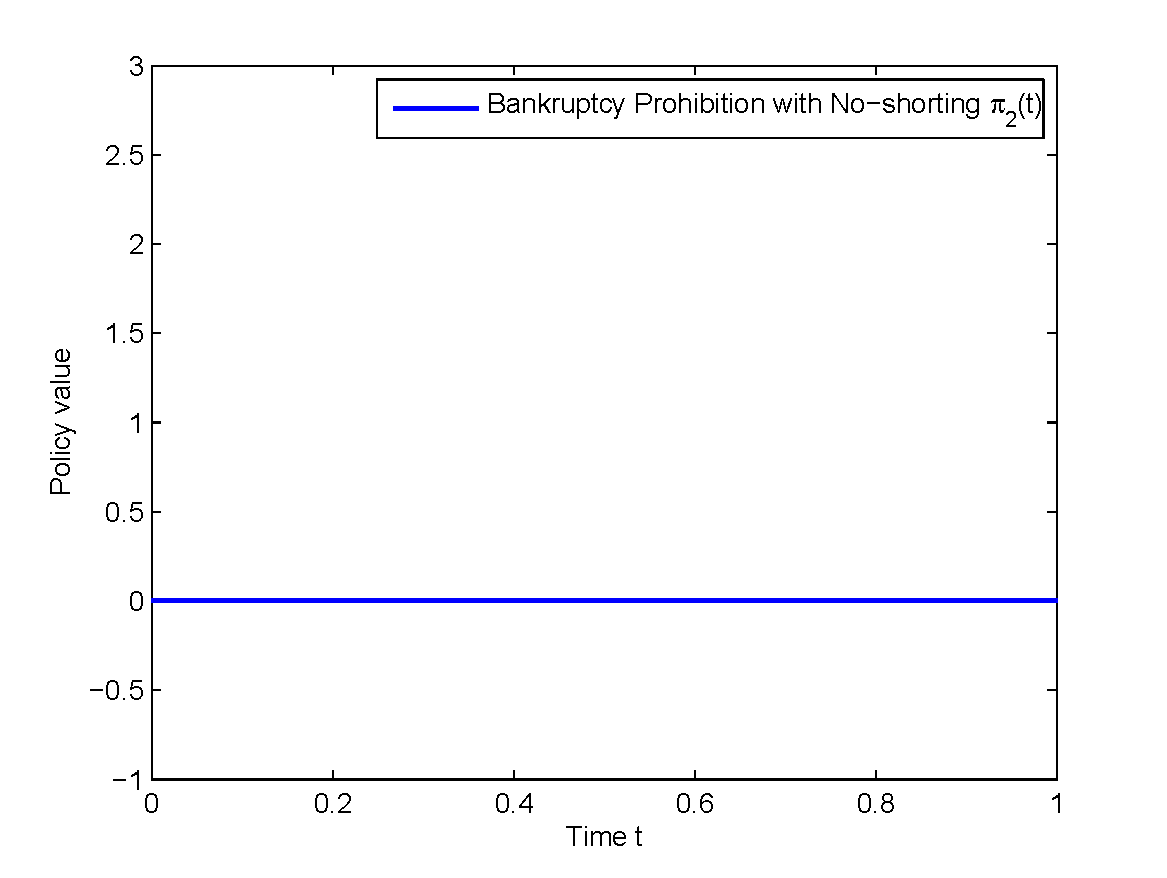}
\includegraphics[height=1.7in]{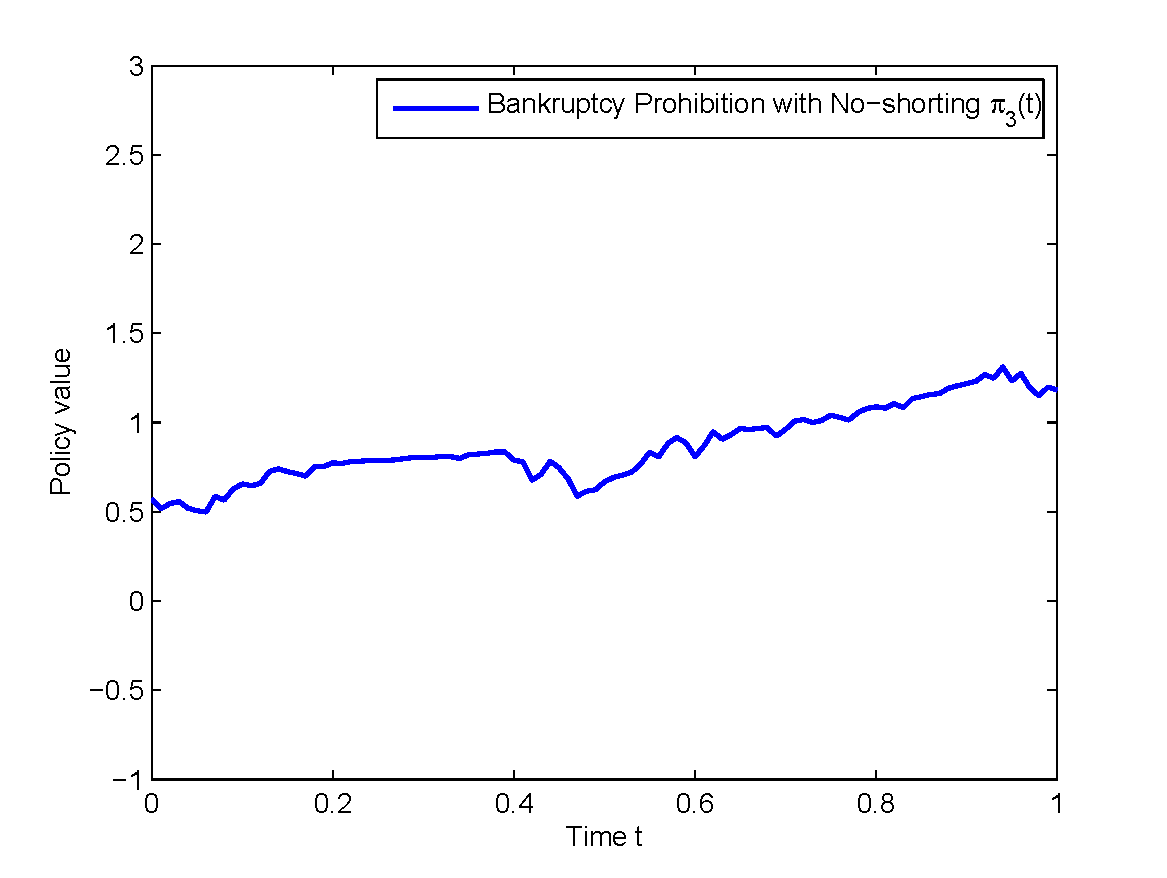} \\
\mbox{Policy of Bankruptcy Prohibition with No-shorting Constraint}
\end{array}
\end{array}
$$

$$
\begin{array}{ccc}
\begin{array}{c}
\includegraphics[height=1.7in]{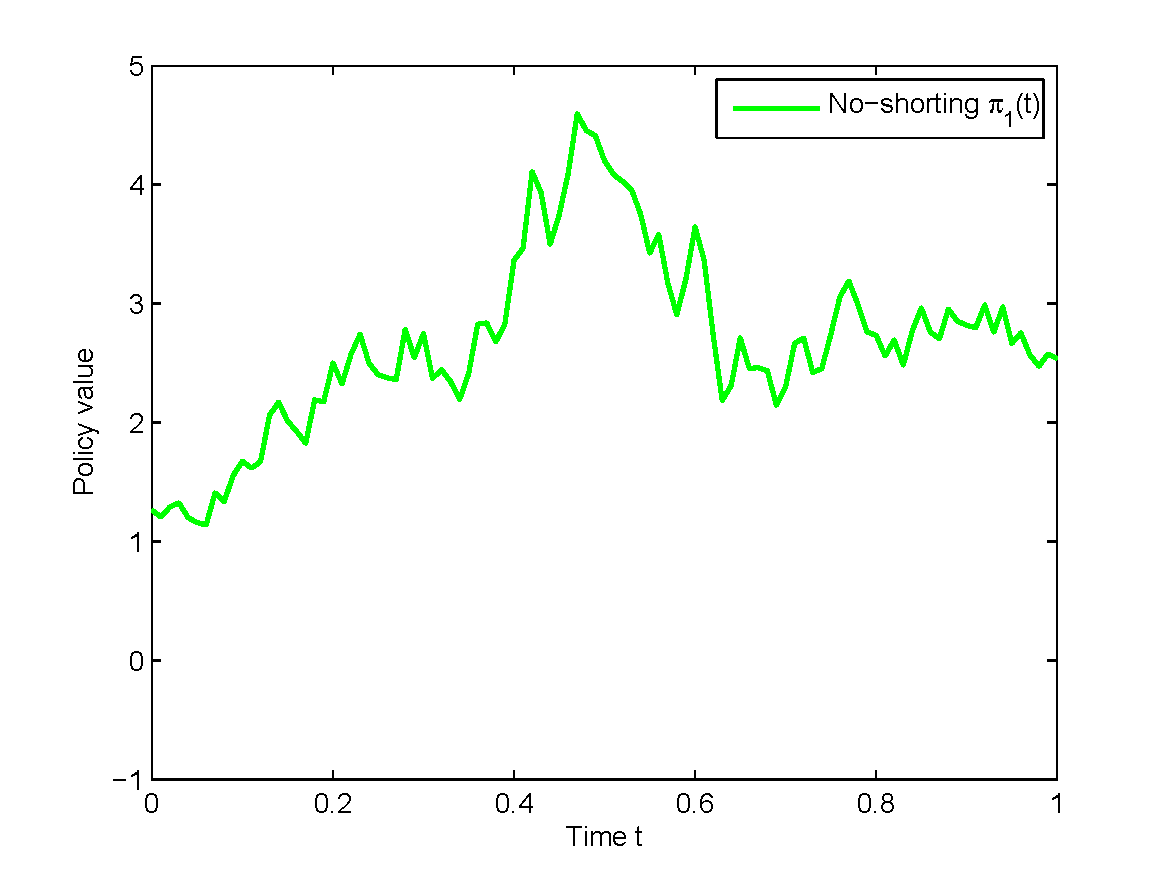}
\includegraphics[height=1.7in]{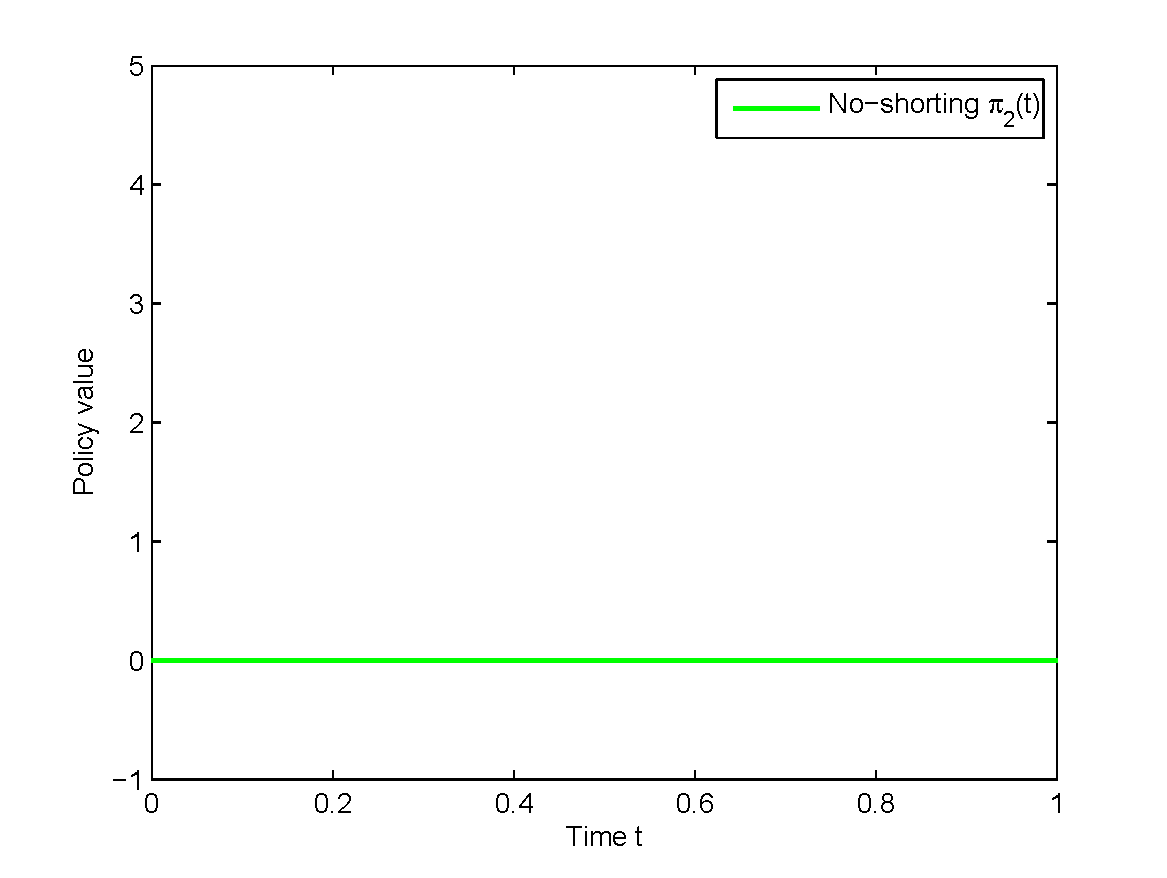}
\includegraphics[height=1.7in]{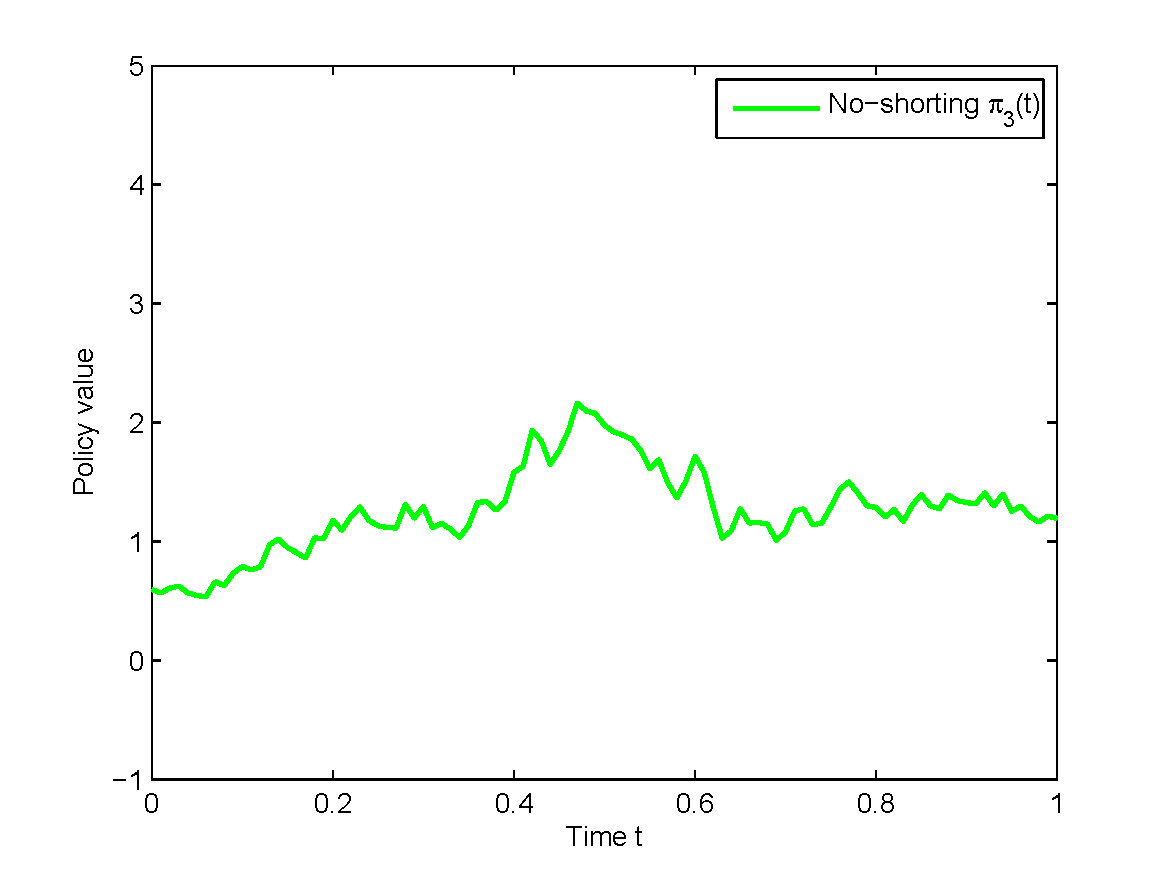} \\
\mbox{Policy of No-shorting Constraint without Bankruptcy Prohibition}
\end{array}
\end{array}
$$

$$
\begin{array}{ccc}
\begin{array}{c}
\includegraphics[height=1.7in]{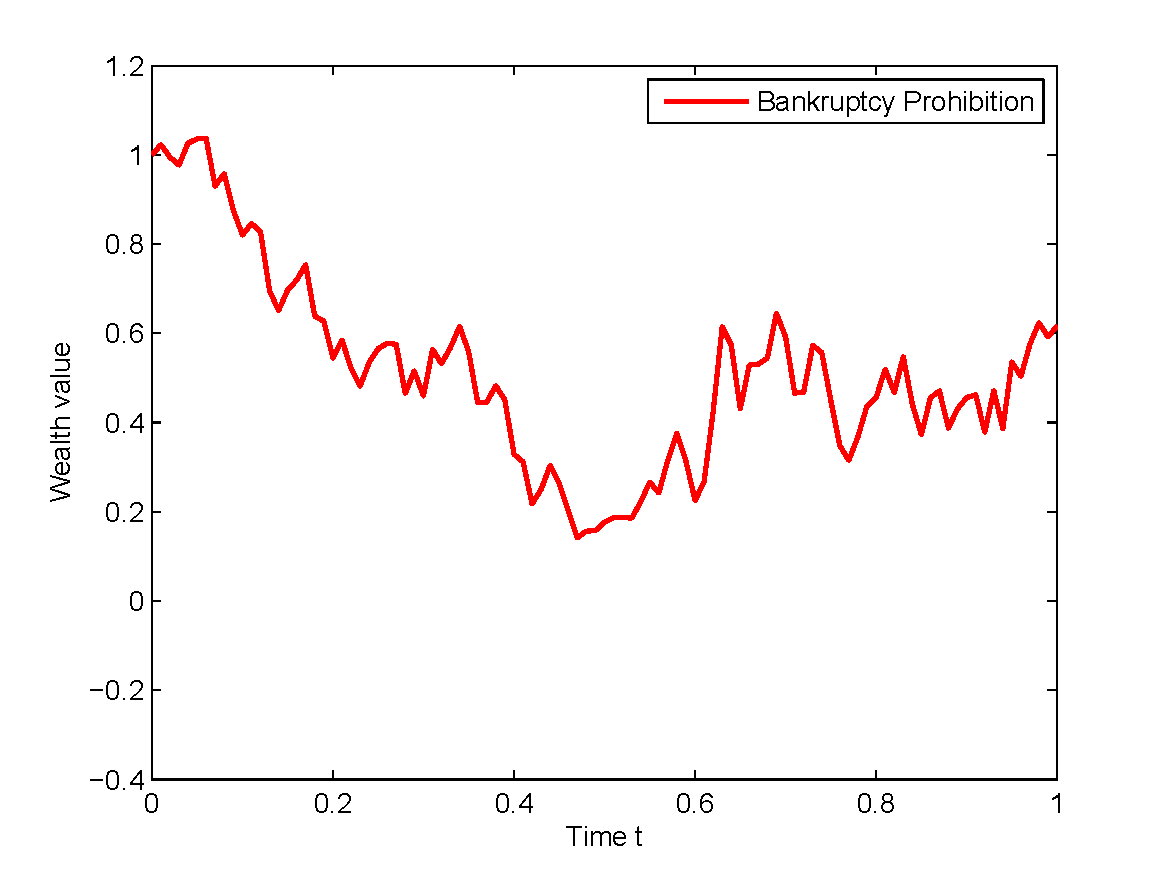}
\includegraphics[height=1.7in]{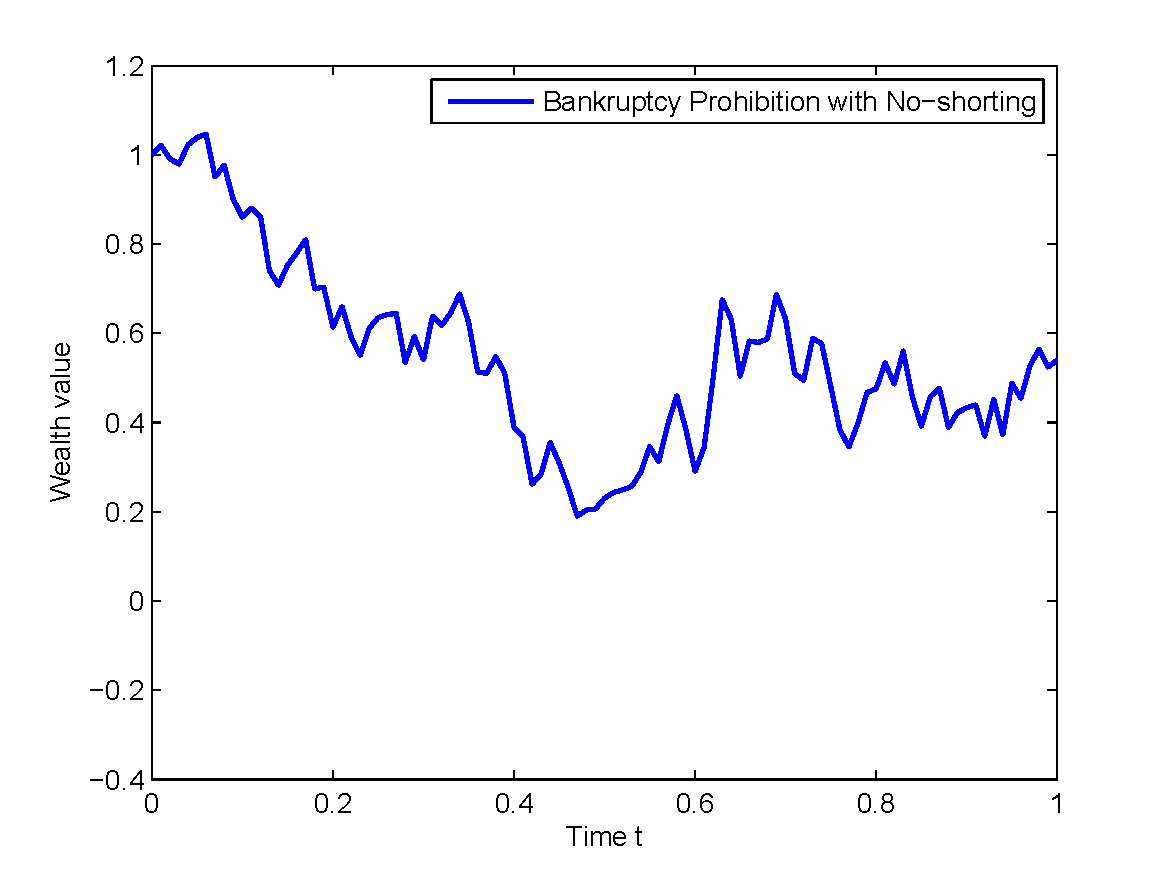}
\includegraphics[height=1.7in]{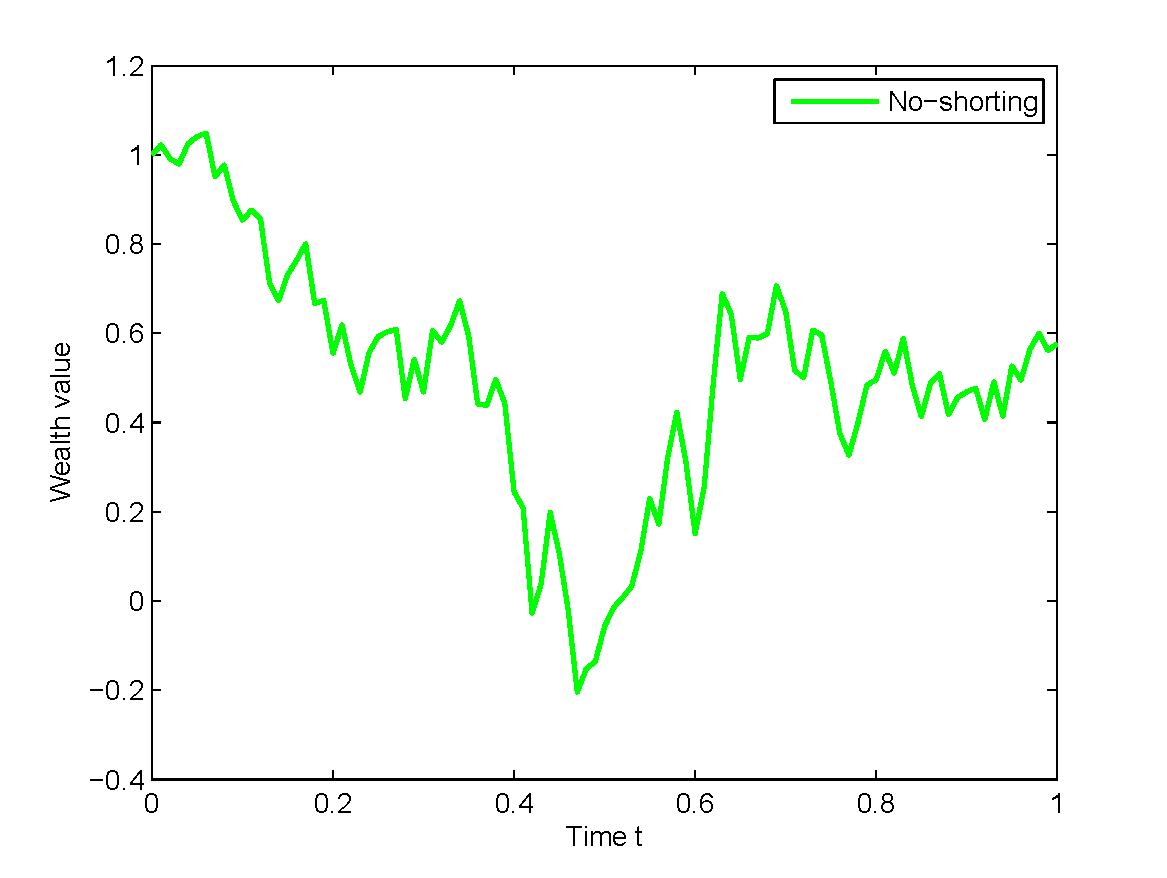} \\
\mbox{Different Wealth Processes}
\end{array}
\end{array}
$$

$$
\begin{array}{ccc}
\begin{array}{c}
\includegraphics[height=2.5in]{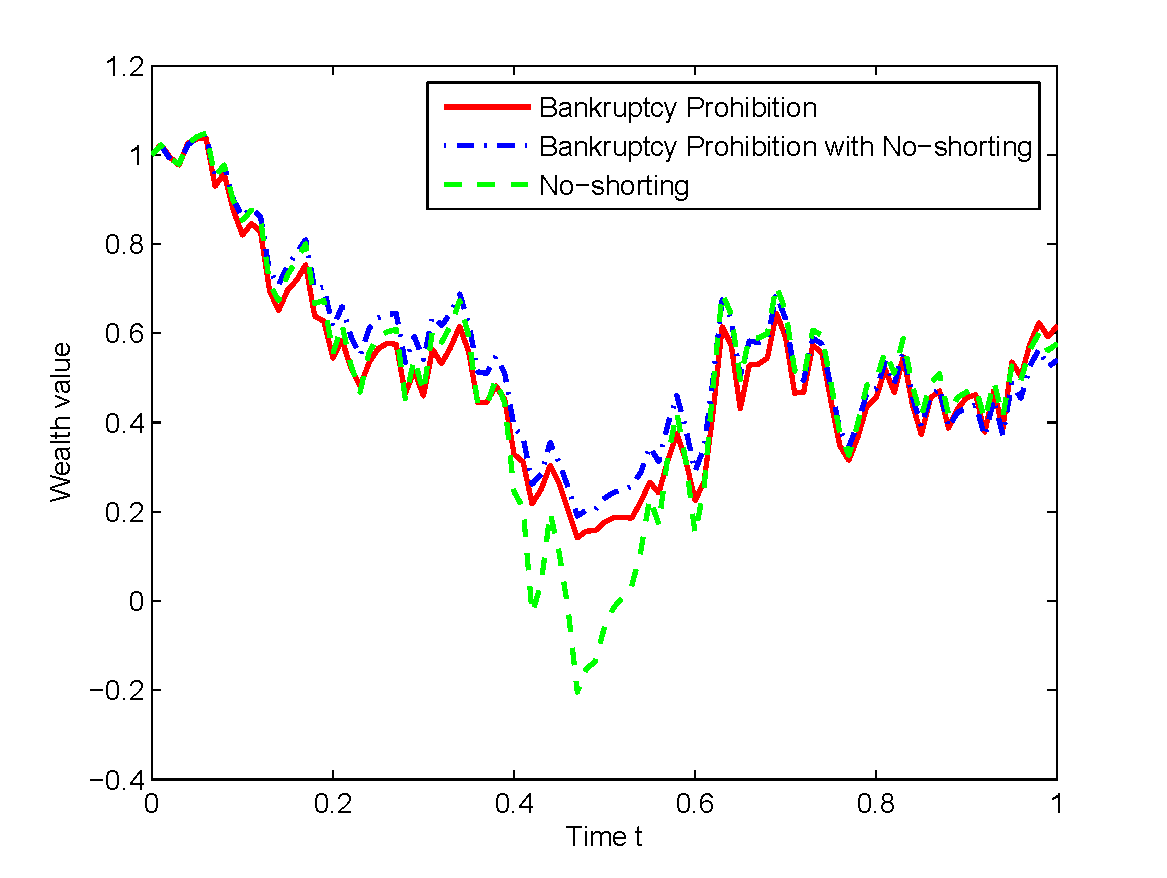} \\
\mbox{Comparision of Wealth Processes}
\end{array}
\end{array}
$$

\section{Conclusion}
\noindent
We have studied the continuous-time mean-variance portfolio selection with mixed restrictions of bankruptcy prohibition (\textit{constrained state})
and convex cone portfolio constraints (\textit{constrained controls}).
The main contribution of the paper is that we developed semi-analytical expressions for the pre-committed efficient mean-variance policy without the viscosity solution technique.
A natural extension of our result to continuous-time linear-quadratic cone constrained controls with constrained states is straightforward, at
least conceptually. On the other hand, if the rates of all market coefficients are random, the problem becomes more complicated.
An important challenge appears when one considers general markets with convex portfolio constraints, so the constraints may not be cone.


\end{document}